\newtheorem{theorem}{Theorem}
\newtheorem{lemma}[theorem]{Lemma}
\newtheorem{corollary}[theorem]{Corollary}
\theoremstyle{remark}
\renewcommand{\setminus}{-}
\newcommand{\first}{\mathsf{first}}
\newcommand{\last}{\mathsf{last}}
\newcommand{\comment}[1]{}
\title{Finding All Useless Arcs in Directed Planar Graphs}
\author{
  Jittat Fakcharoenphol\thanks{
    Kasetsart University, Bangkok, Thailand}
  \and
  Bundit Laekhanukit\thanks{
    Max-Planck-Institut f\"ur Informatik, Saarbr\"{u}ken, Germany
    \&
    Shanghai University of Finance and Economics, China.
    Partially supported by the ISF grant \#621/12. I-CORE grant \#4/11 and NSF \#CCF-1740425.}
  \and
  Pattara Sukprasert\thanks{
    Kasetsart University, Thailand \&
    University of Maryland, College Park MD, USA}
}
\begin{document}

\maketitle

\begin{abstract}
%
We present a linear-time algorithm for
simplifying flow networks on directed planar graphs:
Given a directed planar graph on $n$ vertices, a source vertex $s$ and
a sink vertex $t$,
our algorithm removes all the arcs
that do not participate in any simple $s,t$-path in linear-time.
The output graph produced by our algorithm satisfies the
prerequisite needed by
the $O(n\log n)$-time algorithm of Weihe [FOCS'94 \& JCSS'97]
for computing maximum $s,t$-flow in directed planar graphs.
Previously, Weihe's algorithm could not run in $O(n\log n)$-time
due to the absence of the preprocessing step;
all the preceding algorithms run in $\tilde{\Omega}(n^2)$-time
[Misiolek-Chen, COCOON'05 \& IPL'06; Biedl, Brejov{\'{a}} and Vinar, MFCS'00].
Consequently, this provides an alternative $O(n\log n)$-time algorithm
for computing maximum $s,t$-flow in directed planar graphs
in addition to the known $O(n\log n)$-time algorithms
[Borradaile-Klein, SODA'06 \& J.ACM'09;
 Erickson, SODA'10].

Our algorithm can be seen as a (truly) linear-time $s,t$-flow sparsifier
for directed planar graphs, which runs faster than
any maximum $s,t$-flow algorithm
(which can also be seen of as a sparsifier).
The simplified structures of the resulting graph might be useful
in future developments of maximum $s,t$-flow algorithms
in both directed and undirected planar graphs.
\end{abstract}


\section{Introduction}
\label{sec:intro}

The {\em maximum $s,t$-flow} problem is a fundamental problem in
Combinatorial Optimization that has numerous applications
in both theory and practice.
A basic instance of maximum flow where the underlying graph
is planar has been considered as an important special case
and has been studied since 50's in the early work of Ford and
Fulkerson \cite{FordF56}.
Since then, there have been steady developments of maximum flow algorithms
on undirected planar graphs.
Itai and Shiloach \cite{ItaiS79} proposed an algorithm for
the maximum $s,t$-flow problem on undirected planar graphs
that runs in $O(n^2\log n)$ time,
and in subsequent works \cite{Hassin81,Reif83,HassinJ85,Frederickson87,KleinRRS94-STOC94,HenzingerKRS97,ItalianoNSW11}, the running time have been
improved to the current best $O(n\log\log{n})$-time algorithm
by Italiano~et~al.~\cite{ItalianoNSW11}.

Another line of research is the study of the maximum $s,t$-flow
problem in directed planar graphs. The fastest algorithm with the
running time of $O(n\log n)$ is due to Borradaile and
Klein~\cite{BorradaileK09}.  Historically, in 1994,
Weihe~\cite{Weihe97} presented a novel approach that would solve the
maximum $s,t$-flow problem on directed planar graphs in $O(n\log n)$
time.  However, Weihe's algorithm requires a preprocessing step that
transforms an input graph into a particular form: (1) each vertex
(except source and sink) has degree three, (2) the planar
embedding has no clockwise cycle, and (3) every arc must participate
in some {\em simple $s,t$-path}.

The first condition can be guaranteed by a simple
(and thus linear-time) reduction,
and the second condition can be obtained by an algorithm of
Khuller, Naor and Klein in \cite{KhullerNK93} which runs in
$O(n\log n)$-time 
(this was later improved to linear time \cite{HenzingerKRS97}).
Unfortunately, for the third condition, there was no known algorithm
that could remove all such {\em useless} arcs in $O(n\log n)$-time.
As this problem seems to be simple, this issue had not been noticed
until it was pointed out much later by Biedl, Brejov{\'{a}} and Vinar
\cite{BiedlBV00}.
Although an $O(n\log n)$-time algorithm for the maximum $s,t$-flow
problem in directed planar graphs has been devised by Borradaile and
Klein \cite{BorradaileK09}, the question of removing all the useless
arcs in $O(n\log n)$-time remains unsolved.

In this paper, we study the {\em flow network simplification} problem,
where we are given a directed planar graph $G=(V,E)$ on $n$ vertices,
a source vertex $s$ and a sink vertex $t$, and the goal is to remove
all the arcs $e$ that are not contained in any simple $s,t$-path.
One may observe that the problem of determining the usefulness of a
single arc involves finding two vertex-disjoint paths, which is
NP-complete in general directed graphs \cite{FortuneHW80}.
Thus, detecting all the useless arcs at once is non-trivial.
Here we present a linear-time algorithm that determines all the
useless arcs, thus solving the problem left open in the work of
Weihe~\cite{Weihe97} and settling the complexity of
simplifying a flow network in directed planar graphs.

The main ingredient of our algorithm is a decomposition algorithm
that slices a plane graph into small strips with simple structures.
This allows us to design an appropriate algorithm and data structure 
to handle each strip separately.
Our data structure is simple but requires a rigorous analysis of the
properties of a planar embedding.
We use information provided by the data structure to determine
whether each arc $e$ is contained in a simple $s,t$-path $P$
in $O(1)$ time, thus yielding a linear-time algorithm.
The main difficulty is that we cannot afford to explicitly
compute such $s,t$-path $P$ (if one exists) as it would lead to
$O(n^2)$ running time.
The existence of any path $P$ can only be determined implicitly by
our structural lemmas.


Our main algorithm runs in linear time.
However, it requires the planar embedding
(which is assumed to be given together with the input graph)
to contain no clockwise cycle
and that every vertex except the source and the sink has
degree exactly three.
We provide a sequence of linear-time reductions
that produces a plane graph that satisfies the above requirements
with the guarantees that
the value of maximum $s,t$-flow is preserved.
In particular, we apply a standard degree-reduction to reduce
the degree of the input graph and invoke
the algorithm of
Khuller, Naor, and Klein \cite{KhullerNK93} to modify the planar
embedding using an application of shortest path computation in the
dual graph, which can be done in linear time by the algorithm of
Henzinger, Klein, Rao, and Subramanian~\cite{HenzingerKRS97}.

\subsection{Our Contributions}
\label{sec:contributions}

Our contribution is two-fold.
Firstly, our algorithm removes all the useless arcs in a
directed planar $s,t$-flow network in linear-time,
which thus completes the framework of Weihe \cite{Weihe97}
and yields an alternative $O(n\log n)$-time algorithm for
computing maximum $s,t$-flow on directed planar graphs.
Secondly, our algorithm can be seen as
a (truly) linear-time $s,t$-flow sparsifier,
which runs faster than any known maximum $s,t$-flow algorithm
on directed (and also undirected) planar graphs
(which can be seen as an $s,t$-flow sparsifier as well).
%
%
The plane graph produced by our algorithm has simple structures
that can be potentially exploited in the further developments of
network flow algorithms;
in particular, this may lead to a simpler or faster algorithm
for computing maximum $s,t$-flow in directed planar graphs.
In addition, our algorithm could be adapted to
simplify a flow network in undirected planar graphs as well.


\subsection{Related Work}
\label{sec:related-work}

Planar duality plays crucial roles in planar flow algorithm
developments.
Many maximum flow algorithms in planar graphs exploit
a shortest path algorithm as a subroutine.
However, the $O(n\log n)$-time algorithm for the maximum
$s,t$-flow problem in directed planar graphs by Borradaile and Klein
\cite{BorradaileK09}, which is essentially a left-most augmenting-path
algorithm, is not based on shortest path algorithms.
Erickson~\cite{Erickson10} reformulated the maximum flow problem
as a parametric shortest path problem.
The two $O(n\log n)$-algorithms were shown to be the same.
\cite{Erickson10} but with different interpretation.
Roughly speaking,
the first one by Borradaile and Klein runs mainly on the primal graph
while the latter one by Erickson runs mainly on the dual graph.
Borradaile and Harutyunyan~\cite{BorradaileH13} explored this
path-flow duality further and showed a correspondence between maximum
flows and shortest paths in directed planar graphs with no restriction
on the locations of the source and the sink.

For undirected planar unit capacity networks,
Weihe~\cite{Weihe97-undirected} presented a linear time algorithm for
the maximum $s,t$-flow.  Brandes and Wagner~\cite{BrandesW00} and
Eisenstat and Klein~\cite{EisenstatK13} presented linear time
algorithms when the planar unit-capacity network is directed.

A seminal work of Miller and Naor~\cite{MillerN95} studied flow
problems in planar graphs with multiple sources and multiple sinks.
The problem is special in planar graphs as one cannot use standard
reduction to the single-source single-sink problem without destroying
the planarity.  Using a wide range of planar graph techniques,
Borradaile, Klein, Mozes, Nussbaum,
Wulff-Nilsen~\cite{BorradaileKMNWN11} presented an $O(n\log^3 n)$
algorithm for this problem.

\section{Preliminaries and Background}
\label{sec:prelim}

We use the standard terminology in graph theory.
By the {\em degree} of a vertex in a directed graph, we mean
the sum of its indegree and its outdegree.
A path $P\subseteq G$ is a {\em simple path}
if each vertex appears in $P$ at most once.
A {\em cycle} in a directed graph is a simple directed walk
(i.e., no internal vertices appear twice)
that starts and ends at the same vertex.
A {\em strongly connected component} of $G$ is a
maximal subgraph of $G$ that is strongly connected. 
The {\em boundary} of a strongly connected component
is the set of arcs that encloses the component
(arcs on the unbounded face of the component).
We call those arcs {\em boundary arcs}.
For any strongly connected component $C$ such that $s,t\notin V(C)$,
a vertex $v\in V(C)$ is an {\em entrance} of $C$
if there exists an $s,v$-path $P\subseteq G$ such that
$P$ contains no vertices of $C$ except $v$, i.e., $V(P)\cap V(C)=\{v\}$.
Similarly, a vertex $v\in V(C)$ is an {\em exit} of $C$
if there exists an $v,t$-path $P\subseteq G$ such that
$P$ contains no vertices of $C$ except $v$, i.e., $V(P)\cap V(C)=\{v\}$.


Consider an $s,t$-flow network consisting of
a directed planar graph $G$,
a source vertex $s$ and a sink vertex $t$.
We say that an arc $uv$ is a {\em useful} arc (w.r.t. $s$ and $t$)
if there is a simple $s,t$-path $P$ containing $uv$.
Thus, the $s,u$ and the $v,t$ subpaths of $P$ have no common vertices.
Otherwise, if there is no simple $s,t$-path containing $uv$,
then we say that $uv$ is a {\em useless} arc (w.r.t. $s$ and $t$).
Similarly, a path $P$ is {\em useful} (w.r.t. $s$ and $t$)
if there is a simple $s,t$-path $Q$ that contains $P$.
Note that if a path $P$ is useful, then all the arcs of
$P$ are useful.
However, the converse is not true, i.e.,
all the arcs of $P$ are useful does not imply
that $P$ is a useful path.

Throughout the paper, we denote the input directed planar graph by $G$
and denote the number of vertices and arcs of $G$
by $n$ and $m=O(n)$, respectively.
We also assume that a planar embedding of $G$ is given as input, and
the sink vertex $t$ is on the unbounded face of the embedding.
Note that we assume that our flow networks have unit-capacities
although the algorithm works on networks with arbitrary capacities as well.


\subsection{Planar Embedding and Basic Subroutines}
\label{sec:prelim:planar-embedding}
We assume that the plane graph is given in a standard format
in which the input graph is represented by
an adjacency list whose arcs sorted in counterclockwise order.
To be precise, our input is a table $\calT$ of adjacency lists
of arcs incident to each vertex $v\in V(G)$
(i.e., $V(G)$ is the index of $\calT$)
sorted in counterclockwise order.
Each adjacency list $\calT(v)$ is a doubly linked-list
of arcs having $v$ as either heads or tails
(i.e., arcs of the forms $uv$ or $vw$).
%
Given an arc $vw$ one can find an arc $vw'$ next to $vw$ in the
counterclockwise (or clockwise) order.
This allows us to query in $O(1)$ time the {\em right-most arc}
(resp., {\em left-most arc}) of $uv$, which is an
arc $vw$ in the reverse direction that is nearest to $uv$ in the
counterclockwise (resp., clockwise) order.

\subsection{Left and Right Directions}
\label{sec:prelim:left-right}

Since we are given a planar embedding,
we may compare two paths/arcs using the notion of
{\em left to} and {\em right to}.
Let us consider a reference path $P$ and place it
so that the start vertex of $P$ is below its end vertex.
We say that a path $Q$ is {\em right to} $P$
if we can cut the plane graph along the path $P$ 
in such a way that
$Q$ is enclosed in the right side of the half-plane.
Our definition includes the case that $Q$ is a single arc.
We say that an arc $uu'$ {\em leaves $P$ to} 
(resp., {\em enters $P$ from}) the right
if $u\in V(P)$ and $uu'$ lies on the right half-plane
cutting along $P$.
Similarly, we say that a path 
$Q$ {\em leaves $P$ to}
(resp., {\em enters $P$ from}) the right
if the first arc of $Q$ leaves $P$ to (resp., enters $P$ from)
the right half-plane cutting along $P$.
These terms are also used analogously for the case of the left direction.

As mentioned, the representation of a plane graph 
described in the previous section allow us to
query in $O(1)$ time the {\em right-most} arc $vw$ of a given arc $uv$,
which is the arc in the reverse direction nearest to $uv$ in the
counterclockwise order.
Consequently, we may define the {\em right-first-search} algorithm
as a variant of the depth-first-search algorithm
that chooses to traverse to the right-most (unvisited) arc
in each step of the traversal.

We may assume that the source $s$ has a single arc $e$ leaving it.
Thus, the right-first-search (resp., left-first-search) started from $s$
gives a unique ordering of vertices
(because every path must start from the arc $e$).


\subsection{Forward and Backward Paths}
\label{sec:prelim:forward-backward}

Forward and backward paths are crucial objects that we use
to determine whether an edge is useful or useless.
Let $F$ be a reference path, which we call a {\em floor}.
We order vertices on $F$ according to their appearance on $F$
and denote such order by $\pi:V(F)\rightarrow [|V(F)|]$.
We may think that $F$ is a line that goes from left to right.

Consider any path $P$ that starts and ends at vertices on $F$.
We say that $P$ is a {\em forward path $P$ w.r.t. $F$}
if the order of vertices in $V(P)\cap V(F)$ agrees with that of $F$,
i.e., for all vertices $u,v\in V(P)\cap V(F)$,
$\pi(u) < \pi(v)$ implies that $u$ appears
before $v$ on $P$;
otherwise, we say that $P$ is a
{\em backward path $P$ w.r.t. $F$}.
In general, we will use the term forward path to mean
a minimal forward path, i.e.,
a forward path $P$ that intersects $F$ only
at its start and end vertices and share no inner vertices with $F$.
We will use the term {\em extended forward path}
to mean a non-minimal forward path.
In this sense, the ceiling $U$ is also an extended forward path.
%
%
Analogously, we use the same notions for backward paths.
A path $P$ is {\em flipped} w.r.t. $F$
if it starts from the left-side of $F$ and
ends on the right-side of $F$.
Similarly, $P$ is {\em reverse-flipped} w.r.t. $F$
if it starts from the right-side of $F$ and
ends on the left-side of $F$.
A {\em non-flipped} path is a path that lies on the left-side of $F$,
and a {\em hanging} path is a path that lies on the right-side of $F$.

\subsection{Strips and Links}
\label{sec:prelim:strips}

We are ready to define our main structural components.
A strip is formed by a {\em floor} $F$ and
a forward path $U$ w.r.t. $F$, called a {\em ceiling}.
A {\em strip} is a region enclosed by
a floor $F$ and a ceiling $U$,
and we denote the strip by $C_{U,F}$.
Observe that if we draw $F$ in such a way that $F$
is a line that goes from left to right,
then $F$ lies beneath $U$ in the planar drawing.
The two paths $U$ and $F$ form the {\em boundary} of the strip $C_{U,F}$.
A strip is {\em proper} if $U$ and $F$ are internally vertex disjoint,
i.e., they have no vertices in common except the start and the end vertices.
Generally, we use the term strip to mean a proper strip
except for the case of {\em primary strip}, which we will
define later in Section~\ref{sec:s-inside}.

Consider a proper strip $C_{U,F}$.
We call arcs in $U$ and $F$ {\em boundary arcs}
and call other arcs
(i.e., arcs in $E(C_{U,F})\setminus (E(U)\cup E(F))$)
{\em inner arcs}.
Similarly, vertices in $U$ and $F$ are called {\em boundary vertices}
and other vertices
(i.e., vertices in $V(C_{U,F})\setminus (V(U)\cup V(F))$)
are called {\em inner vertices}.
%
A path $P\subseteq C_{U,F}$ which is not entirely contained in $F$ or
$U$ is called a {\em link} in $C_{U,F}$ if its start and end vertices
are boundary vertices and all other vertices are inner vertices, i.e.,
$V(P)\cap (V(U)\cup V(F))=\{u,v\}$ where $u,v$ are the first and the
last vertices of $P$, respectively.  Observe that a link has no
boundary arcs, i.e., $E(P)\cap (E(U)\cup E(F))=\emptyset$.
%
%
A link in $C_{U,F}$ whose start and end vertices are in the floor $F$
can be classified as {\em forward} and {\em backward} in the same
manner as forward and backward paths (w.r.t. the floor $F$).
Specifically, {\em forward} (resp., {\em backward}) links are minimal
forward (resp., backward) paths.
A link, however, can start at some vertex on the floor $F$ and end at
an inner vertex of the ceiling $U$; in this case, we call it an {\em
  up-cross} path.  On the other hand, if it goes from an inner vertex
of the ceiling $U$ to some vertex on the floor $F$, we call it a {\em
  down-cross} path.  A {\em hanging path} inside a strip is defined to
be a hanging path w.r.t. the ceiling $U$.  That is, a hanging path is
a link that starts and ends at inner vertices of the ceiling $U$.
Note that a hanging path can be either forward or backward hanging path.
%
%
The classification of these links are shown
in Figure~\ref{fig:paths-in-strip}.

\begin{figure}
  \centering
  \includegraphics[width=4in]{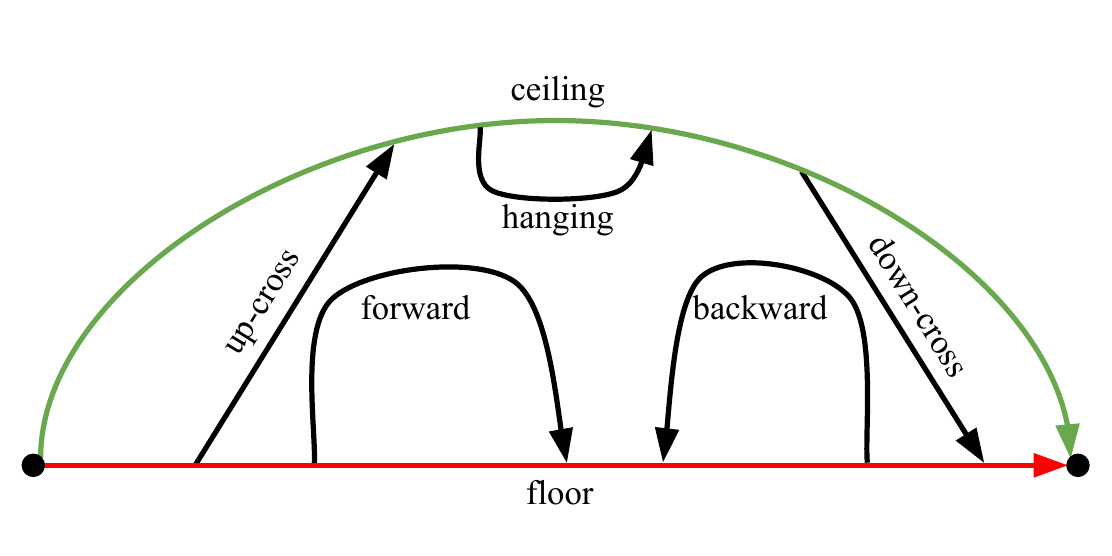}
  \caption{The classification of links in a strip.}
  \label{fig:paths-in-strip}
\end{figure}

A link inside a proper strip slices the strip into two pieces: one on
the left and on the right of the link.  If we slice a strip with a
forward, up-cross, down-cross or forward hanging path, then we have
two strips that are again proper strips.
In this sense, we say that a strip is {\em minimal} if
we cannot slice it further, i.e.,
it has neither forward, up-cross, down-cross nor forward hanging paths
(but, backward paths are allowed).

\subsection{The Usefulness of Forward Paths}

The usefulness of forward paths inside a strip $C_{U,F}$ can be determined
by the usefulness of its floor as stated in the following lemma.

\begin{lemma}
\label{lem:forward-path-is-useful}
Consider a strip $C_{U,F}$ such that the source vertex $s$
and the sink vertex $t$ are not in $C_{U,F}$.
If $F$ is useful, then so is any forward path w.r.t. $C_{U,F}$.
\end{lemma}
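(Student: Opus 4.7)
The plan is a direct substitution. Let $P$ be a minimal forward path in $C_{U,F}$ from $u$ to $v$ on $F$ with $\pi(u) < \pi(v)$. Since $F$ is useful, fix a simple $s,t$-path $Q$ containing $F$ and decompose it as $Q = Q_s \cdot F \cdot Q_t$, where $Q_s$ runs from $s$ to $\mathsf{start}(F)$ and $Q_t$ runs from $\mathsf{end}(F)$ to $t$. The natural candidate for a simple $s,t$-path containing $P$ is
\[
  Q' \;=\; Q_s \cdot F[\mathsf{start}(F), u] \cdot P \cdot F[v, \mathsf{end}(F)] \cdot Q_t,
\]
which is clearly an $s,t$-walk through $P$.

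The next step is to verify that $Q'$ is simple, which reduces to checking pairwise intersections of its five constituent pieces. By simplicity of $Q$ together with $F \subseteq Q$, the four pieces $Q_s$, $F[\mathsf{start}(F),u]$, $F[v,\mathsf{end}(F)]$, $Q_t$ are pairwise compatible. By minimality of the forward path $P$, the interior $\mathrm{int}(P) := V(P) \setminus \{u,v\}$ is disjoint from $V(F)$, and hence also from both $F$-segments. So the sole remaining obstacle is that some vertex of $\mathrm{int}(P)$ could lie on $V(Q_s) \cup V(Q_t)$.

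To eliminate this, I would choose $Q$ to minimize $|V(Q) \cap \mathrm{int}(P)|$ among all useful $s,t$-paths containing $F$, and then show this minimum must be zero by a planar pocket-swap. The cycle $C := P \cup F[u,v]$ is a simple closed curve lying inside $C_{U,F}$ which, by the Jordan curve theorem, bounds a region $R \subseteq C_{U,F}$; the hypothesis $s, t \notin V(C_{U,F})$ places $s, t \notin R$. If $Q_s$ visits some $w \in \mathrm{int}(P) \subseteq \partial R$, the maximal pocket of $Q_s$ near $P$ can be rerouted around the opposite side of $P$ (still inside the strip), strictly reducing $|V(Q) \cap \mathrm{int}(P)|$ and contradicting minimality; the symmetric argument handles $Q_t$. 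The hard part will be executing this pocket-swap rigorously: the rerouted segment must avoid $V(F) \setminus \{u,v\}$ (this uses that both sub-disks of the strip cut by $P$ meet $V(F)$ only at $\{u,v\}$), avoid the remainder of $Q$ (this uses simplicity of $Q$ plus a first-intersection choice), and respect edge orientations, and I expect the last of these to rely on the forward-path structure of $P$ together with the no-clockwise-cycle hypothesis of the paper to orient the detour consistently.
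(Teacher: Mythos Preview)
Your substitution template and the idea of choosing a minimizing witness are both right, but you are minimizing the wrong quantity, and that is why you get stuck at the directed pocket-swap. The paper chooses $R$ to be a \emph{shortest} simple $s,t$-path containing $F$, and then argues in one stroke that $V(R)\setminus V(F)$ misses the \emph{entire} strip $C_{U,F}$, not merely $\mathrm{int}(P)$: since $s,t\notin C_{U,F}$, any excursion of $R_s$ or $R_t$ into the closed region $C_{U,F}$ would have to enter and leave through the boundary $U\cup F$, and (because $R$ is simple and already uses all of $F$) such an excursion can be pruned, contradicting minimality of $R$. Once $V(R)\setminus V(F)$ is disjoint from $C_{U,F}$, it is automatically disjoint from $\mathrm{int}(P)\subseteq C_{U,F}$, and your $Q'$ is simple with no further work.

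By contrast, your plan minimizes $|V(Q)\cap \mathrm{int}(P)|$ and then has to manufacture an explicit directed detour around $P$ inside the strip. That is a genuine difficulty you correctly flagged: in a directed graph you cannot just ``reroute to the other side of $P$'' along the Jordan curve, you need actual arcs there, and neither the forward-path property of $P$ nor the no-clockwise-cycle hypothesis hands you such arcs in the required direction. The paper's global minimality sidesteps this entirely, because it never needs to build a replacement path through the strip interior; it only needs to observe that a shortest $R$ has no business being there. So: keep your $Q'$, but swap your extremal choice from ``fewest hits on $\mathrm{int}(P)$'' to ``shortest overall,'' and the orientation headache disappears.
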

\begin{proof}
Let $u$ and $v$ be the start and the end vertices of $F$,
and let $P$ be a forward path connecting some vertices $q,w \in V(F)$.
Since $F$ is useful, there exists a simple $s,t$-path $R$
that contains $F$.
We may choose $R$ to be the shortest such path.
Since $s,t$ are not in $C_{U,F}$,
the minimality of $R$ implies that
none of the vertices in $V(R)\setminus V(F)$ are in $C_{U,F}$.
Consequently, the path $P$ (which is a link) has no vertices in $V(R)\setminus V(F)$.
We will construct a new simple $s,t$-path
by replacing the $q,w$-subpath of $R$ by $P$.
To be formal,
let $R_{s,q}$ and $R_{w,t}$ be the $s,q$ and $w,t$ subpaths
of $R$, respectively.
We then define a new path
$R' = R_{s,q}\cdot P\cdot R_{w,t}$.
It can be seen that $R'$ is a simple $s,t$-path.
This proves that $P$ is useful.
\end{proof}

The following corollary follows immediately from
Lemma~\ref{lem:forward-path-is-useful}.

\begin{corollary}
\label{cor:ext-forward-path-is-useful}
Consider a strip $C_{U,F}$ such that the source vertex $s$
and the sink vertex $t$ are not in $C_{U,F}$.
If $F$ is useful, then so is any extended forward path w.r.t. $C_{U,F}$.
\end{corollary}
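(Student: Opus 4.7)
The plan is to observe that the argument for Lemma \ref{lem:forward-path-is-useful} adapts essentially verbatim to extended forward paths, as long as we account for the extra intersections of $P$ with the floor $F$. So first I would let $P$ be an extended forward path from $u$ to $v$ (with $u,v \in V(F)$ and $\pi(u) < \pi(v)$), and let $R$ be a shortest simple $s,t$-path containing $F$, which exists because $F$ is useful. The same minimality argument as in the proof of Lemma \ref{lem:forward-path-is-useful} (using $s,t \notin V(C_{U,F})$) shows that no vertex of $V(R) \setminus V(F)$ lies in $C_{U,F}$; in particular, the inner vertices of $P$ that are not on $F$ are disjoint from $V(R) \setminus V(F)$.

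The main extra issue, which is the reason Corollary \ref{cor:ext-forward-path-is-useful} is not immediate from the lemma, is that $P$ may share additional vertices with $F$ strictly between $u$ and $v$ on $P$. To handle this I would invoke the defining property of a forward path: every $w \in V(P) \cap V(F)$ satisfies $\pi(u) \le \pi(w) \le \pi(v)$ and appears on $P$ between $u$ and $v$. Consequently every such $w$ lies on the $u,v$-subpath of $F$, and hence on the subpath $R_{u,v}$ of $R$ (since $F \subseteq R$).

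I would then form the candidate replacement $R' := R_{s,u} \cdot P \cdot R_{v,t}$, where $R_{s,u}$ and $R_{v,t}$ denote the corresponding pieces of $R$. To verify that $R'$ is a simple $s,t$-path it suffices to check that $V(P) \cap V(R_{s,u} \cup R_{v,t}) = \{u,v\}$: inner vertices of $P$ lying off of $F$ avoid $R_{s,u}$ and $R_{v,t}$ because those portions of $R$ stay outside $C_{U,F}$ by the minimality observation, while inner vertices of $P$ lying on $F$ belong to $R_{u,v}$ (by the forward property), which is precisely the subpath that has been cut out. Since $R'$ contains $P$, this proves that $P$ is useful.

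The step I expect to be the real content is the second one: nailing down that the forward condition forces every extra $F$-intersection of $P$ to live in the excised middle segment $R_{u,v}$. Everything else is a straightforward reprise of the lemma. An alternative, also clean, would be induction on the number of interior intersections $|V(P)\cap V(F)|-2$, decomposing $P$ at its first $F$-intersection after $u$ into either an $F$-edge or a minimal forward path (useful by Lemma \ref{lem:forward-path-is-useful}) followed by a shorter extended forward path; but the direct substitution argument above seems simpler to write.
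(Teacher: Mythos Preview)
Your argument is correct and is essentially the same as the paper's: both splice $P$ into a shortest simple $s,t$-path $R\supseteq F$, using minimality of $R$ to keep $V(R)\setminus V(F)$ out of the strip and the forward property to confine $V(P)\cap V(F)$ to the $u,v$-segment of $F$. The only cosmetic difference is that the paper first decomposes $P$ into $F$-pieces and minimal forward links and extends it to a path $P'$ spanning all of $F$ before substituting, whereas you substitute $P$ directly for the $u,v$-subpath of $R$; the resulting $R'$ is the same either way.
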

\begin{proof}
We proceed the proof in the same way as that of
Lemma~\ref{lem:forward-path-is-useful}.
First consider any extended forward path $P$.
Observe that $P$ can be partitioned into subpaths
so that each subpath is either contained in $R$
or is a forward path w.r.t $C_{U,F}$.
We replace subpaths of $F$ by these forward paths,
which results in a new $u,v$-path $P'\supseteq P$,
where $u$ and $v$ are the start and the end vertices of the floor $F$,
respectively.
Since $F$ is useful, there exists a simple $s,t$-path $R$
that contains $F$.
We may assume that $R$ is the shortest such path.
By the minimality of $R$, we know that $P'$ has no vertices
in $V(R)\setminus V(F)$ (because $P'\subseteq C_{U,F}$).
So, we can construct a new simple $s,t$-path $R'$
by replacing the path $F\subseteq R$ in $R$ by the path $P'$.
The simple $s,t$-path $R'\supseteq P'\supseteq P$ certifies that $P$ is useful.
\end{proof}

\subsection{Other important facts}

Below are the useful facts that form the basis of our algorithm.

\begin{lemma}
\label{lem:remain-useful}
  Let $e'$ be any useless arc in $G$.
  Then any useful arc $e\in E(G)$ is also useful in $G\setminus \{e'\}$.
\end{lemma}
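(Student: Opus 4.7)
The plan is to argue by direct case analysis on whether the useless arc $e'$ lies on the witnessing simple $s,t$-path for $e$. The core observation is that uselessness of $e'$ is defined as the non-existence of \emph{any} simple $s,t$-path through $e'$, which is exactly the obstruction we need to rule out in order to reuse the witness path for $e$ unchanged.

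Concretely, I would start from the hypothesis that $e$ is useful in $G$, which by definition gives a simple $s,t$-path $P \subseteq G$ with $e \in E(P)$. I would then observe that $e' \notin E(P)$: otherwise $P$ itself would be a simple $s,t$-path containing $e'$, contradicting the assumption that $e'$ is useless. Therefore $P$ is a subgraph of $G \setminus \{e'\}$, and so $P$ is a simple $s,t$-path in $G \setminus \{e'\}$ that contains $e$, which is exactly the definition of $e$ being useful in $G \setminus \{e'\}$.

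There is essentially no obstacle here; the lemma is a tautology given the definitions of useful and useless from the preliminaries. The only thing to be careful about is to invoke the right definition: ``useless'' must mean ``contained in \emph{no} simple $s,t$-path,'' not merely ``not on some particular path,'' so that a single witness path for $e$ suffices to establish $e$'s usefulness in the smaller graph without needing to rebuild or reroute anything. No structural planar-graph machinery, strip decomposition, or left/right orientation is required; the result holds for arbitrary (not necessarily planar) directed graphs.
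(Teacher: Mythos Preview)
Your proof is correct and essentially identical to the paper's own argument: take a witnessing simple $s,t$-path $P$ for $e$, note that $e'\notin E(P)$ since otherwise $e'$ would be useful, and conclude that $P\subseteq G\setminus\{e'\}$ still certifies the usefulness of $e$.
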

\begin{proof}
By the definition of useful arc,
there exists a simple $s,t$-path $P$ in $G$ containing the arc $e$,
and such path $P$ cannot contain $e'$; otherwise,
it would certify that $e'$ is useful.
Therefore, $P\subseteq G\setminus\{e'\}$, implying that $e$ is useful
in $G\setminus\{e'\}$.
\end{proof}


\begin{lemma}
\label{lem:boundary-is-ccw}
Let $C$ be a strongly connected component,
and let $B$ be the set of boundary arcs of $C$.
Then $B$ forms a counter clockwise cycle.
\end{lemma}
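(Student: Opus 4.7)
The plan is to argue in two stages. First, I would establish a local property: every boundary arc $uv \in B$, traversed in its own direction, has the outer face of $C$ on its right. Second, I would glue these local observations into a global closed walk, then upgrade that walk to a simple counterclockwise directed cycle.

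For the first stage, strong connectivity of $C$ implies every arc of $C$ (in particular every boundary arc $uv$) lies on some directed cycle $\gamma$ contained entirely in $C$. By the standing no-clockwise-cycle assumption on the planar embedding of $G$, the cycle $\gamma$ is counterclockwise, which is precisely the statement that the bounded interior of $\gamma$ lies to the left of $\gamma$ when $\gamma$ is traversed in its orientation. Since $uv \in B$ sits on the unbounded face of $C$, exactly one of the two sides of $uv$ is the outer face of $C$ (the other side is some inner face of $C$; there is no ambiguity here because $C$ is bridge-free, as strong connectivity forces every arc to lie on a cycle). Because the interior of $\gamma$ lies inside the envelope of $C$, it sits on the non-outer side of $uv$. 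Being on the left of $uv$, it forces the outer face of $C$ to be on the right of $uv$.

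For the second stage, I would trace the outer face of $C$ by the usual rule: starting at any boundary arc $uv$, at each arrival vertex take the next boundary arc in the clockwise rotation around that vertex. Standard planar face-tracing shows this produces a closed walk $W$ that uses every boundary arc exactly once (using bridge-freeness). The key point is that, thanks to stage one, the "next" boundary arc we pick is always an outgoing arc of the current vertex, so $W$ respects every arc's orientation and hence is a directed closed walk. To promote $W$ from a closed walk to a simple cycle, I would invoke the degree-3 hypothesis on vertices other than $s,t$: if $W$ revisited some vertex $v$, then $v$ would need at least two incoming and two outgoing boundary arcs, contradicting $\deg(v)=3$. Finally, because $W$ travels with the outer face of $C$ consistently on its right, equivalently with the interior of $C$ on its left, $W$ is counterclockwise by definition.

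The main obstacle is stage one, namely the passage from the local algebraic property "$\gamma$ is counterclockwise" to the global geometric claim "the outer face of $C$ lies on the right of $uv$." It is tempting but wrong to think the interior of $\gamma$ equals the interior of $C$; the correct argument only needs that the interior of $\gamma$ lies on the inside of $uv$ relative to $C$, which in turn needs a careful use of bridge-freeness to guarantee that $uv$ has a well-defined outer-face side at all. Once that is pinned down cleanly, stages two is essentially a standard planar face-tracing bookkeeping argument combined with the degree bound.
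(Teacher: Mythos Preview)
Your proposal is correct and takes a genuinely different route from the paper. The paper's proof first asserts (parenthetically, without justification) that the underlying undirected boundary of $C$ is a simple cycle, and then argues by contradiction: if the orientations along that cycle are not consistent, some vertex $u$ carries two outgoing boundary arcs $uv$, $uw$; closing each of these into a directed cycle via strong connectivity yields two cycles, and since the outer face sits in the wedge between $uv$ and $uw$ at $u$, one of those cycles must have its interior on the right of its first arc, i.e.\ be clockwise. Your argument instead isolates the local fact up front---every boundary arc has the outer face on its right, because any directed cycle through it is counterclockwise and has bounded interior---and then builds the boundary cycle directly by face tracing, invoking the degree-three hypothesis explicitly to rule out repeated vertices. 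What your approach buys is transparency: the paper's proof actually also needs degree three (or some substitute) to justify that the undirected boundary is a simple cycle, but this is hidden inside the unproved parenthetical; your version makes that dependency visible. Both arguments ultimately rest on the same core observation, namely that a counterclockwise cycle through a boundary arc pins the outer face to the right of that arc; the paper uses it to derive a contradiction at a bad vertex, while you use it uniformly to orient the whole face walk.
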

\begin{proof}
  Since we assume that the embedding has no clockwise cycle,
  it suffices to show that $B$ is a cycle.
  We prove by a contradiction.
  Assume that $B$ is not a cycle.
  Then we would have two consecutive arcs in $B$ that go
  in opposite directions.
  (Note that the underlying undirected graph of $B$ is always a
  cycle.)
  That is, there must exist a vertex $u$ with two leaving arcs,
  say $uv$ and $uw$.
  Since $u,v,w$ are in the same strongly connected component,
  the component $C$ must have a  $v,u$-path $P$ and and a $w,u$-path
  $P'$.
  We may assume minimality of $P$ and $P'$ and thus assume that
  they are simple paths.
  Now $P\cup\{uv\}$ and $P'\cup\{uw\}$ form a cycle,
  and only one of them can be counterclockwise
  (since $uv$ and $uw$ are in opposite direction),
  a contradiction.
\end{proof}

\begin{lemma}
\label{lem:disjoint-entrance-exit-paths}
Let $u$ and $v$ be an entrance and an exit of $C$,
and let $P_u$ and $P_v$ be an $s,u$-path $P_u$ and
a $v,t$-path $P_v$ that
contains no vertices of $C$ except $u$ and $v$, respectively.
Then $P_u$ and $P_v$ are vertex disjoint.
\end{lemma}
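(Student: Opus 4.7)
The plan is to proceed by contradiction and exploit only the maximality of $C$ as a strongly connected component; no planarity argument appears to be necessary here.

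Suppose for contradiction that some vertex $x$ lies on both $P_u$ and $P_v$. First I would argue that $x\notin V(C)$. Since $P_u$ meets $V(C)$ only at $u$, either $x=u$ or $x\notin V(C)$; similarly either $x=v$ or $x\notin V(C)$. If $x\in V(C)$ this forces $x=u=v$, a degenerate situation in which $P_u$ and $P_v$ trivially share the single vertex $u=v$; so the lemma is implicitly assuming $u\neq v$, and I may take $x\notin V(C)$.

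Next, I would use the two subpaths $P_u[x,u]$ and $P_v[v,x]$ to certify that $x$ and $u$ are mutually reachable in $G$. The subpath $P_u[x,u]$ is already an $x,u$-walk whose internal vertices avoid $V(C)$ entirely. For the opposite direction, I take any $u,v$-path $Q$ inside $C$ (which exists since $C$ is strongly connected) and concatenate it with $P_v[v,x]$, whose internal vertices also lie outside $V(C)$; this gives a $u,x$-walk. Hence $x$ and $u$ are strongly connected in $G$.

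Because $u\in V(C)$ and $C$ is a maximal strongly connected subgraph of $G$, mutual reachability of $x$ and $u$ forces $x\in V(C)$, contradicting $x\notin V(C)$. Therefore $V(P_u)\cap V(P_v)=\emptyset$, as desired. The only mild obstacle is keeping the case $u=v$ straight, but that case is either excluded by hypothesis or falls under the trivial interpretation above; the rest of the argument is a one-step appeal to the definition of a strongly connected component.
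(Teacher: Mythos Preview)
Your argument is correct and is essentially the paper's own proof spelled out in detail: both proceed by contradiction and observe that a common vertex of $P_u$ and $P_v$ would be strongly connected to $u\in V(C)$, violating maximality of $C$. The paper compresses this into one sentence (``the intersection of $P_u$ and $P_v$ induces a strongly connected component strictly containing $C$''), whereas you explicitly exhibit the two walks $P_u[x,u]$ and $Q\cdot P_v[v,x]$ and also flag the degenerate case $u=v$, which the paper elsewhere rules out via the degree-three assumption.
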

\begin{proof}
Suppose $P_u$ and $P_v$ are not vertex-disjoint.
Then the intersection of $P_u$ and $P_v$ induces
a strongly connected component strictly containing $C$.
This is a contradiction since $C$ is a maximal strongly connected
subgraph of $G$.
\end{proof}

\section{Overview of Our Algorithm}
\label{sec:overview}

In this section, we give an overview of the algorithm.
First, we preprocess the plane graph so that it meets the following requirements.
\begin{enumerate}
\item There is {\em no clockwise cycle} in the planar embedding.
\item The source $s$ is adjacent to only {\em one outgoing arc}.
\item Every vertex except $s$ and $t$ has {\em degree three}.
\end{enumerate}
We provide in Section~\ref{sec:preprocessing} a sequence of reductions
that outputs a graph satisfied the above conditions
while guaranteeing that
the value of maximum $s,t$-flow is preserved.

It is worth noting that the degree-three condition is not
a strict requirement for our main algorithm.
We only need every vertex to have either indegree one
or outdegree one; a network that meets this prerequisite is called
a {\em unit network}.

After the preprocessing step, we apply the main algorithm,
which first decomposes the plane graph
into a collection of strongly connected components.
The algorithm then processes each strongly connected component independently.
Notice that the usefulness of arcs that lie outside of
the strongly connected components can be determined
by basic graph search algorithms.
To see this, let us contract all the strongly connected components,
which results in a {\em directed acyclic graph} (DAG).
Since there is no cycle, an arc $uv$ is useful
if and only if $u$ is reachable from $s$, and $t$
is reachable from $v$.
So, it suffices to run any graph search algorithm
to list all the vertices reachable from $s$ and
those reachable from $t$ in the reverse graph.
Hence, we are left to work on arcs inside each strongly connected component.

We classify each strongly connected component into an {\bf outside case}
and an {\bf inside case}.
The outside case is the case that
the source $s$ lies outside of the component,
and the inside case is the case that
the source $s$ is enclosed inside the component.
(Note that since $s$ has no incoming arc, it cannot be on the
boundary of a strongly connected component.)
We deal with the outside case in Section~\ref{sec:st-outside}
and deal with the inside case in Section~\ref{sec:s-inside}.
While the inside case is more complicated, we decompose
a component of the inside case into subgraphs,
which resemblance those in the outside cases.
In both cases, the main ingredient is the {\em strip decomposition},
which allows us to handle all the useless arcs in one go.
We present the strip decomposition algorithm in Section~\ref{sec:strip-decomposition}.
Please see Figure~\ref{fig:outside-inside} for the example of outside and
inside cases.
It is possible to have many inside cases
(see Figure~\ref{fig:component-example}).

\begin{figure}
  \centering
  \begin{tabular}{cc}
  \includegraphics{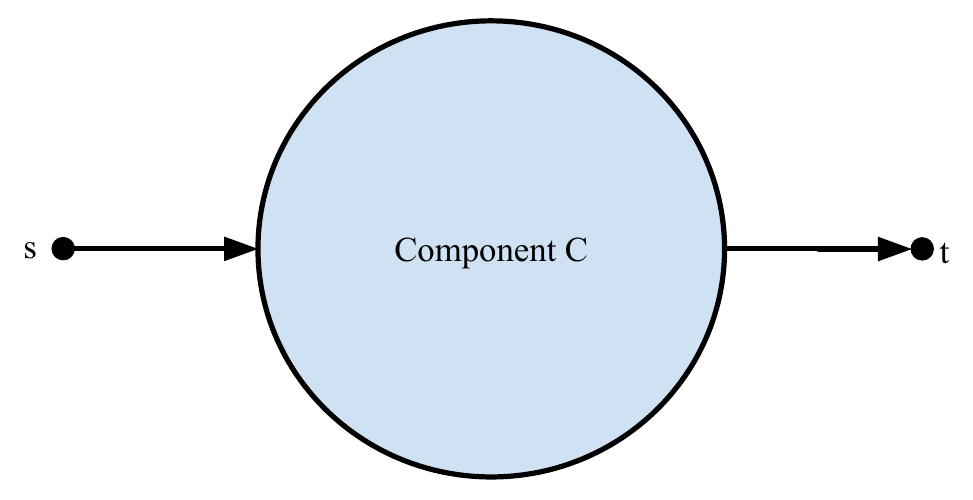}\\
  \includegraphics{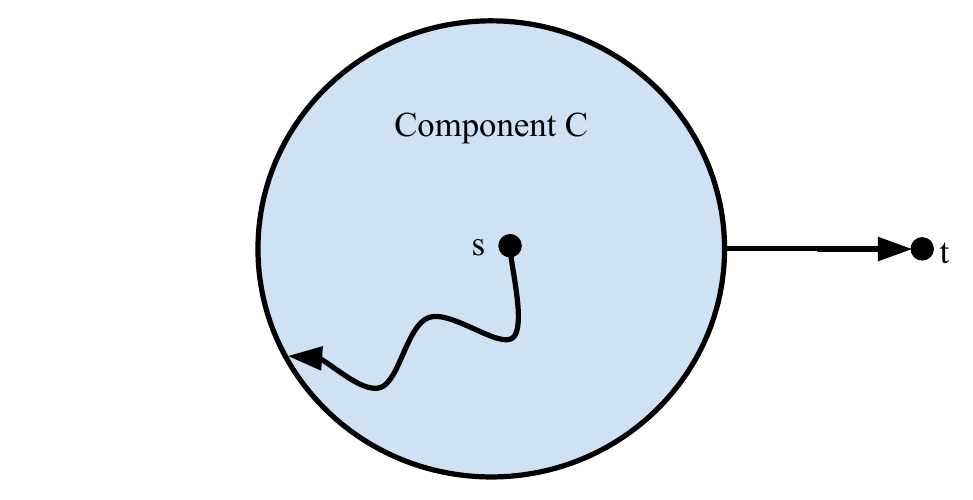}
  \end{tabular}
  \caption{The top figure illustrates the outside case, while the
    bottom figure illustrates the inside case.}
  \label{fig:outside-inside}
\end{figure}

\begin{figure}
  \centering
  \includegraphics[scale=0.75]{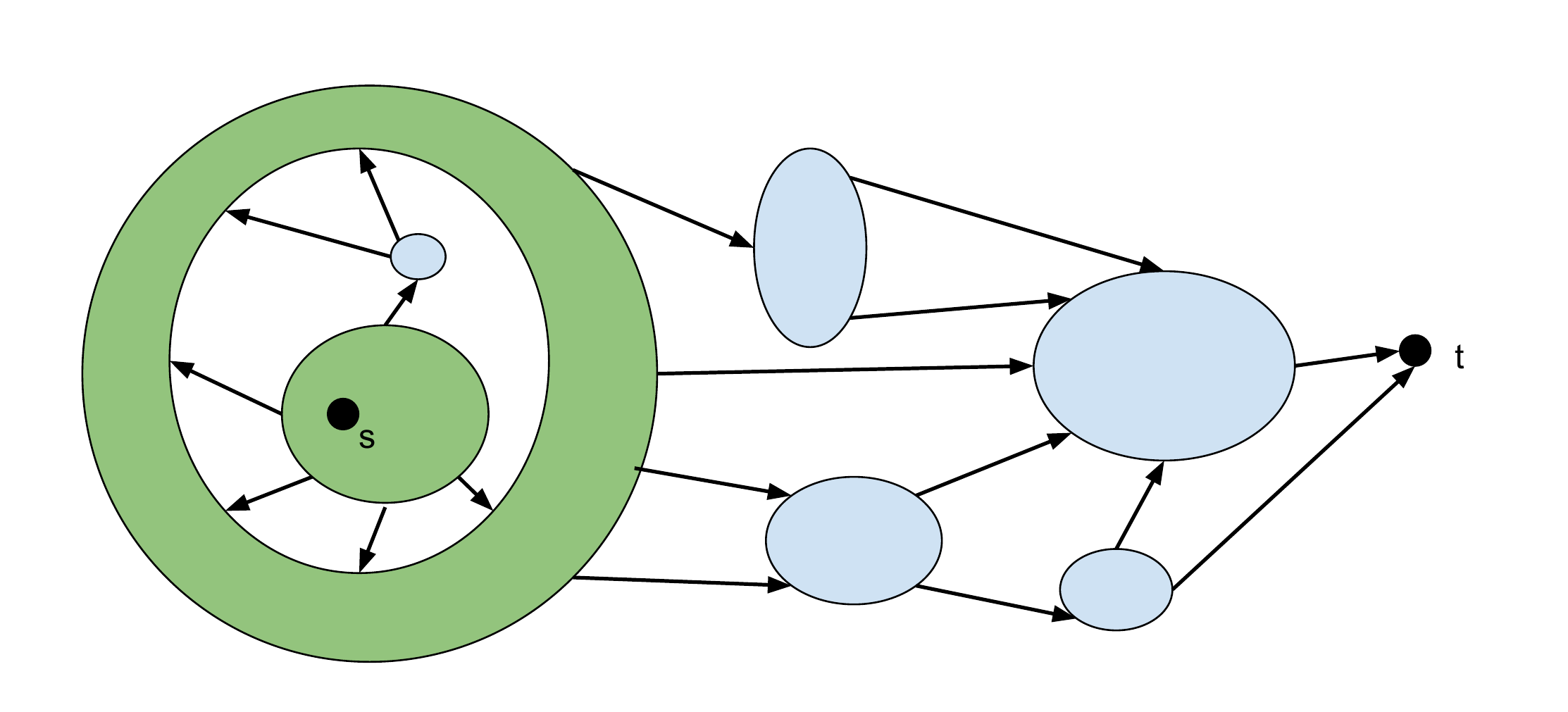}
  \caption{An example showing 7 components.  Note the nested inside cases.}
  \label{fig:component-example}
\end{figure}

\section{Preprocessing The Plane Graph}
\label{sec:preprocessing}

In this section, we present a sequence of reductions that 
provides a plane graph required by the main algorithm.
The input to this algorithm is a directed planar flow network
$(G_0,s_0,t_0)$ consisting of a plane graph $G_0$
(i.e., a directed planar graph plus a planar embedding),
a source $s_0$ and a sink $t_0$.
Our goal is to compute a directed planar flow network $(G,s,t)$
consisting of a plane graph $G$, a source $s$ and a sink $t$
with the following properties.
\begin{itemize}
\item[(1)] The value of the maximum flow in the network $(G,s,t)$
           and that in the original network $(G_0,s_0,t_0)$ are the same.
\item[(2)] The plane graph $G$ has no clockwise cycle.
\item[(3)] Every vertex in $G$ except the source $s$ and the sink $t$ has
           degree exactly three.
\item[(4)] The source $s$ has no incoming arc and has outdegree one.
\item[(5)] The resulting graph $G$ has 
           $|V(G)| \leq |E(G_0)|+2$ and $|E(G)| \leq 2|E(G_0)|+2$.
\end{itemize}

The main subroutine in the preprocessing step is 
the algorithm for removing all the clockwise cycles 
of a given plane graph $H$
by Khuller, Naor and Klein \cite{KhullerNK93},
which guarantees to preserve the value of maximum $s,t$-flow.
This subroutine can be implemented by computing 
single-source shortest path in the dual planar graph.
The setting of this shortest path problem is by placing 
a source $s^*$ on the outer-face of the dual graph,
which can be computed in linear-time using the algorithm of
Henzinger, Klein, Rao, and Subramanian~\cite{HenzingerKRS97}.
This subroutine introduces no new vertices  
but may reverse some arcs.

\subsection{Reduction}
\label{sec:preprocessing:reduction}

Our preprocessing first removes all the clockwise cycles.
Then it applies a standard reduction to reduce the degree
of all the vertices:

\begin{itemize}
\item {\bf Step 1:} We remove all the clockwise cycles from the plane graph $G_0$.
      We denote the plane graph obtained in this step by $G_1$.

\item {\bf Step 2:} We apply the {\em degree reduction} to obtain a plane graph $G_2$
      that satisfies the degree conditions in Properties (4) and (5):

      \medskip

      We replace the source $s_0$ by an arc $ss_0$, i.e.,
      we introduce the new source $s$ and add an arc $ss_0$ joining $s$ to $s_0$.
      For notation convenience, we rename $t_0$ as $t$.
      Then we replace each vertex $v$ of degree $d_v$ by a counterclockwise cycle
      $\hat{C}_v$ of length $d_v$. We re-direct each arc going to $v$ and leaving $v$ to 
      exactly one vertex $\hat{C}_v$ in such a way that no arcs cross.
      (So, each vertex of $\hat{C}_v$ corresponds to an arc incident to $v$ in $G_0$)
\end{itemize}

As we discussed, it follows by the result in \cite{KhullerNK93} that
removing clockwise cycles does not change the value of the maximum flow,
and it is not hard to see that the degree reduction has no effect 
on the value of maximum flow on edge-capacitated networks.
Thus, Property~(1) holds for $(G,s,t)$.
Property~(2) holds simply because we remove all the clockwise cycles at the end,
and Property~(3)-(5) follows directly from the degree reduction.

\section{Strip-Slicing Decomposition}
\label{sec:strip-decomposition}

The crucial part of our algorithm is the decomposition algorithm that
decomposes a strongly connected component $C$ into (proper) {\em strips},
which are regions of $C$ in the planar embedding
enclose by two arc-disjoint paths $U$ and $F$ that share
start and end vertices.

We will present the {\em strip-slicing decomposition} algorithm
that decomposes a given strip $C_{U,F}$ into a collection $\calS$ of
{\em minimal} strips in the sense that any strip in $\calS$ cannot 
be sliced into smaller strips.
Moreover, any two strips obtained by the decomposition are either
disjoint or share only their boundary vertices.
We claim that the above decomposition can be done in linear time.

If then certain prerequisites are met (we will discuss this in the later section),
then our strip-slicing decomposition gives a collection of strips such that
all the inner arcs of each strip are useless 
while all the boundary arcs are useful.

Before proceeding to the presentation of the algorithm,
we advise the readers to recall definitions 
of links and paths in a strip in Section~\ref{sec:prelim:strips}.

\subsection{The Decomposition Algorithm}
\label{sec:strip-decomposition:algorithm}

Now we describe our strip-slicing decomposition algorithm.
We start our discussion by presenting an abstraction
of the algorithm.
We denote the initial (input) strip by $C_{U^*,F^*}$.
We call $U^*$ the {\em top-most} ceiling and
call $F^*$ the {\em lowest floor}.
The top-most ceiling $U^*$ can be a dummy ceiling
that does not exist in the graph.

The decomposition algorithm taking as input a strip $C_{U,F}$.
Then it finds the ``right-most'' path $P$ w.r.t. the floor $F$
that is either a forward path or an up-cross path,
which we call a {\em slicing path}.
This path $P$ slices the strips into two regions,
the {\em up-strip} and the {\em down-strip}.
Please see Figure~\ref{fig:slicing-strip} for illustration.
Intuitively, we wish to slice the input strip into
pieces so that the final collection consists of strips 
that have no forward, up-cross, down-cross 
nor forward hanging paths
(each of these paths can slice a strip into smaller ones).

\begin{figure}
  \centering
  \begin{tabular}{l}
    \includegraphics[width=4in]{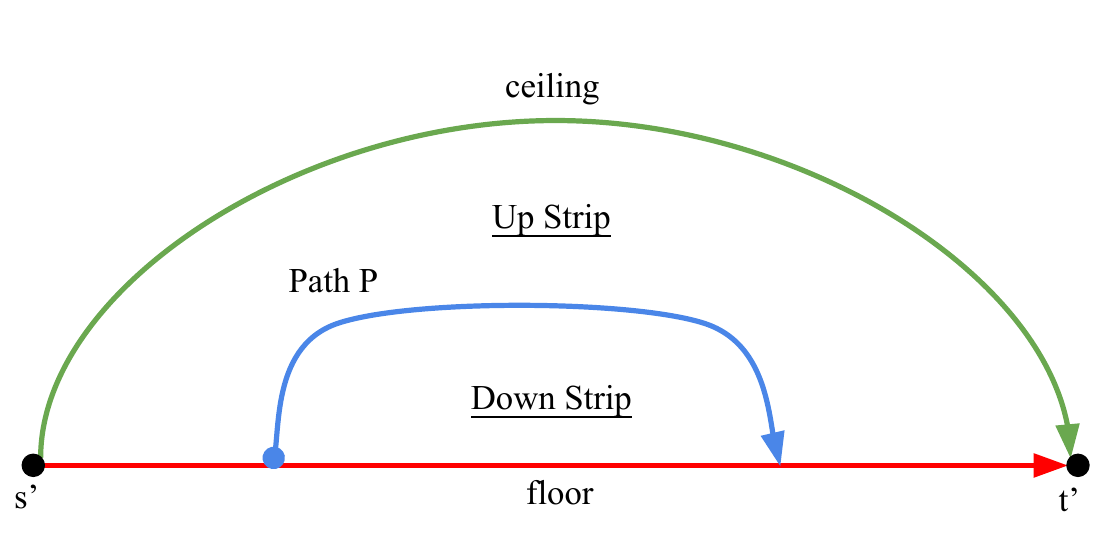}\\
    \includegraphics[width=4in]{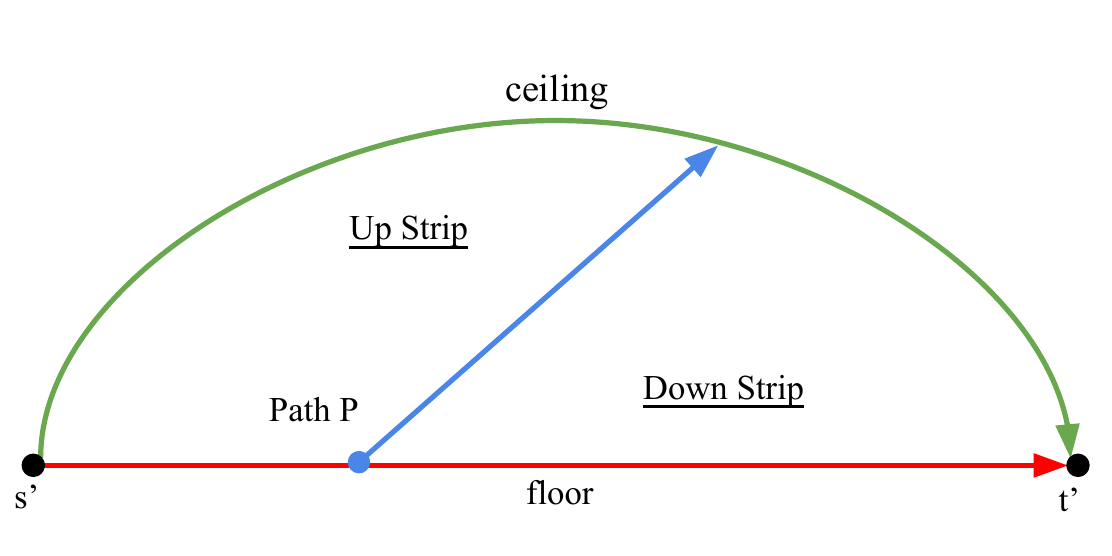}
  \end{tabular}
  \caption{The path $P$ slices the strip into up-strip and down-strips.}
  \label{fig:slicing-strip}
\end{figure}

The naive implementation yields an $O(n^2)$-time algorithm;
however, with a careful implementation, we can speed up
the running time to $O(n)$.
We remark that our decomposition algorithms (both the quadratic and linear time algorithm)
do not need the properties that the input plane graph has no clockwise cycles
and that every vertex (except source and sink) has degree three.
Thus, we are free to modify the graph as long as the graph remains planar. 

To make the readers familiar with our notations,
we start by presenting the quadratic-time algorithm in 
Section~\ref{sec:strip-decomposition:quadratic-time} and prove its correctness.
Then we will present the linear-time algorithm in 
Section~\ref{sec:strip-decomposition:linear-time} and show that 
it has the same properties as that of the quadratic-time algorithm.
The basic subroutines that are shared in both algorithms are presented
in Section~\ref{sec:strip-decomposition:subroutines}

More precisely, we prove the following lemmas in the next two sections.

\begin{lemma}
\label{lem:quadratic-time-decomposition}
There is an $O(n^2)$-time algorithm that, given as input a strip
$C_{U^*,F^*}$, outputs a collection $\calS$ of strips such that
each strip $C_{U,F}\in\calS$ has neither forward, up-cross,
down-cross nor forward hanging path (w.r.t. $U$).
%
\end{lemma}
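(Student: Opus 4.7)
The plan is to recursively slice each strip by a \emph{slicing path}---a link that is either forward, up-cross, down-cross, or forward hanging---until no such link remains, and to bound both the per-step work and the total number of slices by $O(n)$.

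For a current strip $C_{U,F}$, I would first search for a slicing path by depth-first-searching from each boundary vertex along its inner arcs and terminating as soon as the search re-enters the boundary of $C_{U,F}$; the resulting walk yields a link whose type can be classified in $O(1)$ time using the orderings along $F$ and $U$ together with the planar embedding. The entire search requires scanning every inner arc a constant number of times, so it costs $O(n)$ per strip. If a slicing path is discovered, I would take (for instance) the right-most forward or up-cross one, following the overview of Section~\ref{sec:strip-decomposition:algorithm}, and split $C_{U,F}$ along it by walking once around the region enclosed by $U \cup F$ and classifying each inner arc as lying to the left or to the right of $P$ via the counterclockwise adjacency-list representation. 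By Section~\ref{sec:prelim:strips} each of the two pieces is a proper strip, so recursion on both is well-defined; if no slicing path is found, the strip is output as-is.

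To bound the number of strips, I would observe that each slicing consumes an inner link $P$ and promotes each of its arcs to a boundary arc in one of the two children. Because an arc can lie on the boundary of only $O(1)$ strips during the recursion and $G$ has $O(n)$ arcs in total, the total number of slicings---and hence the number of output strips---is $O(n)$. Multiplying by the $O(n)$ cost per step yields the claimed $O(n^2)$ bound.

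The hard part is the structural termination criterion: when the boundary scan returns no slicing path of any of the four forbidden types, we must argue that no such link in fact exists in $C_{U,F}$. Here I would appeal to planarity together with the fact that the strip is enclosed by $U \cup F$ in the embedding, so any forward, up-cross, down-cross, or forward hanging link is a simple path whose endpoints lie on the boundary and whose interior is contained in the strip; an exhaustive DFS initiated from each boundary vertex and restricted to inner arcs must therefore discover at least one such witness if any exists, and the classification in Figure~\ref{fig:paths-in-strip} should be used to verify that none of the four types has been missed.
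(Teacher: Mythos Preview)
Your approach would work, but it differs from the paper's in an important structural way. You search at each step for all four forbidden link types (forward, up-cross, down-cross, forward hanging) and slice by whichever you find; the paper instead searches only for forward and up-cross links and handles the other two types by an invariant. Concretely, the paper first wraps the input strip in an auxiliary two-arc ceiling $(s',x',t')$, so the initial strip trivially has no down-cross or forward-hanging link, and then proves by a case analysis on whether the right-most slicing path is forward or up-cross that neither child of a slice can acquire a down-cross or forward-hanging link. Thus the output strips lack forward and up-cross links by exhaustion of the search, and lack the other two by the maintained invariant. Your more direct route is arguably cleaner for the $O(n^{2})$ statement in isolation, but the paper's invariant is not incidental: it is exactly what enables the linear-time implementation of Lemma~\ref{lem:linear-time-decomposition}, where the search is confined to right-first-search from floor vertices and scanned inner vertices are deleted; without the guarantee that down-cross and forward-hanging links never arise, one would also need to search from ceiling vertices and the deletion argument would break. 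Two small issues in your write-up: the claim that an arc lies on the boundary of only $O(1)$ strips during the recursion is false (a boundary arc can persist down an arbitrarily long chain of descendants), though your conclusion still follows from the correct observation you also make, that each slice promotes at least one fresh inner arc; and for your $O(n)$-per-step search to be sound under shared visited-marking you must process boundary vertices in order along $F$ (and along $U$), which you should state explicitly.
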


\begin{lemma}
\label{lem:linear-time-decomposition}
There is a linear-time algorithm that, given as input a strip
$C_{U^*,F^*}$, outputs a collection $\calS$ of strips such that
each strip $C_{U,F}\in\calS$ has neither forward, up-cross,
down-cross nor forward hanging path (w.r.t. $U$).
%
\end{lemma}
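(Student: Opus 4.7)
The plan is to speed up the natural quadratic-time strip-slicing procedure by amortizing the work of repeatedly searching for slicing paths over a single ``global'' traversal, rather than re-running a right-first-search from scratch each time the floor is updated. Recall that the quadratic algorithm repeatedly extracts the right-most forward or up-cross slicing path $P$ from the current floor $F$, produces a down-strip bounded by $P$ and a subpath of $F$, and recurses on the up-strip whose new floor is $P$. The reason this takes $\Theta(n^2)$ is that the same arcs may be scanned in successive recursive calls; the goal is to arrange a traversal in which every arc is inspected only $O(1)$ times.

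First, I would walk along the lowest floor $F^*$ from left to right while maintaining a stack of ``active'' floors (initially just $F^*$). Whenever the walk reaches a vertex $v$ on the current floor that has an unexplored out-arc entering the strip interior, I initiate a right-first traversal from $v$ that follows the right-most unvisited leaving arc at each step; by the right-most convention described in Section~\ref{sec:prelim:left-right}, the path produced is the rightmost slicing path w.r.t.\ the current floor. The traversal terminates the first time it reaches either the current floor (yielding a forward slicing path), a previously pushed floor (yielding what was an up-cross path when the stacked floor was still the top-most ceiling), or the original top-most ceiling $U^*$. At termination I emit the minimal down-strip bounded by the traversal and the floor segment between the two endpoints, push the slicing path as the new active floor, and resume the walk from its right endpoint.

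For the amortized analysis, I would charge each arc scanned by right-first-search either to the slicing path it becomes part of, or to the minimal strip it ends up lying strictly inside. Slicing-path arcs become boundary arcs of exactly two strips and are never traversed again, because once a floor is pushed onto the stack the walk never re-enters the region below it. Interior arcs of a minimal strip are only examined when the strip is first discovered and when it is closed; using a standard planar adjacency-list representation (Section~\ref{sec:prelim:planar-embedding}) these operations are $O(1)$ per arc. Summed over all strips, the total work is therefore $O(n)$.

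The main obstacle is bookkeeping for ceilings when an up-cross slicing path terminates at an inner vertex $w$ of the current ceiling: this event splits the ceiling into two pieces that must serve as the ceilings of two newly separated strips, and the algorithm must report each new ceiling without rescanning the original. To avoid this, I would store every ceiling and every pushed floor as a doubly linked list indexed by vertex, so that splitting at $w$ is $O(1)$. A second subtlety is verifying that a right-first-search from $v$ never produces a backward path on the current floor (which would not be a legal slicing path); this follows from the planarity of the strip, because the right-first path together with the subpath of the floor between its endpoints encloses a region that must be traversed in a consistent orientation, forcing the endpoints to appear in forward order along the floor. Proving these geometric invariants carefully, and checking that the linear-time algorithm produces exactly the same collection $\calS$ of minimal strips as the quadratic one of Lemma~\ref{lem:quadratic-time-decomposition}, is the heart of the argument.
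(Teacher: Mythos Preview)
Your high-level plan---amortize the right-first searches so each arc is touched $O(1)$ times, store floors/ceilings as doubly linked lists, and argue equivalence with the quadratic procedure---matches the paper's strategy. The concrete mechanism you describe, however, has a real gap. When a right-first search from $u$ on the current floor $F$ returns a forward slicing path $P$ ending at $v$, you ``emit the minimal down-strip bounded by the traversal and the floor segment between the two endpoints, push $P$, and resume the walk from its right endpoint.'' But the down-strip $C_{P,F[u,v]}$ need not be minimal: $P$ being rightmost \emph{from $u$} says nothing about forward or up-cross links that start from floor vertices strictly between $u$ and $v$. Likewise, resuming from $v$ skips the segment $P$ of the up-strip's new floor, so slicing paths that start at internal vertices of $P$ are never discovered. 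In both cases your output collection $\calS$ can contain non-minimal strips, and your amortized charging scheme (charge interior arcs to the strip they end up in) then undercounts, since those arcs would have to be examined again by a correct algorithm.

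The paper closes exactly this gap by \emph{recursing on both} the up-strip and the down-strip, passing the start vertex $v_i=u$ as a restart pointer so that floor vertices to the left of $u$ are never revisited. Two further devices make the amortization go through: (i) every inner arc scanned during an unsuccessful or partially successful right-first search is \emph{deleted} before the next search, so it is charged once and never seen again; and (ii) incoming arcs of floor vertices $v_{i'}$ with $i'<i$ are removed, which is what guarantees that any boundary hit is a forward or up-cross path rather than a backward one (your planarity sketch for this point is not sufficient, since right-first search can certainly land to the left of its start vertex if those incoming arcs are present---the paper's Procedure~2 explicitly continues searching in that case). A three-color scheme (white/red/green) lets floor/ceiling membership be tested in $O(1)$ without duplicating boundary vertices across strips. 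Your stack-of-floors idea could in principle be made to work, but it must visit the down-strip's floor and the $P$-segment of the up-strip's floor, and it needs the arc-deletion trick to keep the total work linear.
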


The important property that we need from the decomposition algorithm is
that if the floor of the input strip $C_{U^*,F^*}$ is useful,
then all the boundary arcs of every strip $C_{U,F}\in\mathcal{S}$
are also useful.

\begin{lemma}
\label{lem:properties-strips}
Let $\mathcal{S}$ be a collection of minimal strips obtained by
running the strip-slicing decomposition on an input strip $C_{U^*,F^*}$.
Suppose the source $s$ and the sink $t$ are not enclosed in $C_{U^*,F^*}$
and that the floor $F^*$ is useful.
Then, for every strip $C_{U,F}\in\mathcal{S}$,
all the boundary arcs of $C_{U,F}$ (i.e., arcs in $U\cup F$) are useful.
\end{lemma}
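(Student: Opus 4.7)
The plan is to prove, by induction on the depth of the recursive decomposition, the strengthened statement that every strip $C_{U,F}$ produced during the algorithm has both $F$ and $U$ useful \emph{as paths} (not merely arc-wise). Since all arcs of a useful path are useful arcs by definition, this immediately yields the lemma. The hypothesis that $s$ and $t$ are not enclosed in $C_{U^*,F^*}$ is inherited by every sub-strip, so Lemma~\ref{lem:forward-path-is-useful} and Corollary~\ref{cor:ext-forward-path-is-useful} are available throughout.

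For the base case $C_{U^*,F^*}$, the floor $F^*$ is useful by assumption, and the ceiling $U^*$ is an extended forward path w.r.t.\ $C_{U^*,F^*}$, hence useful by Corollary~\ref{cor:ext-forward-path-is-useful}. For the inductive step, consider slicing $C_{U,F}$ (with $F$ and $U$ both useful by induction) by a slicing path $P$; let $x,y$ be the start and end of the strip. If $P$ is a forward path from $u$ to $v$ on $F$, then one sub-strip has floor $F[u,v]$ (a subpath of the useful $F$, hence useful) and ceiling $P$ (a minimal forward path in $C_{U,F}$, useful by Lemma~\ref{lem:forward-path-is-useful}), while the other sub-strip has floor $F[x,u]\cdot P\cdot F[v,y]$ (an extended forward path in $C_{U,F}$, useful by Corollary~\ref{cor:ext-forward-path-is-useful}) and ceiling $U$ (useful by the inductive hypothesis).

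The main obstacle is the up-cross case, in which $P$ runs from $u\in F$ to $v\in U$. One sub-strip has floor $F[u,y]$ (useful) and ceiling $P\cdot U[v,y]$; I would check that this ceiling is a minimal forward path w.r.t.\ $C_{U,F}$, because $P$ is a link (so its interior is disjoint from $U\cup F$) and the strip is proper (so $U$ meets $F$ only at $x,y$), whence the intersection of $P\cdot U[v,y]$ with $F$ is exactly the endpoint pair $\{u,y\}$; Lemma~\ref{lem:forward-path-is-useful} then gives its usefulness. The other sub-strip has ceiling $U[x,v]$ (a subpath of the useful $U$, hence useful) but floor $F[x,u]\cdot P$, which is \emph{not} a forward path w.r.t.\ $F$ because its endpoint $v$ lies on $U$ rather than on $F$; this is the point where the preliminary lemmas do not apply directly. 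To resolve it, I take a minimal simple $s,t$-path $R$ containing $F$ (which exists as $F$ is useful); by the same argument used in the proof of Lemma~\ref{lem:forward-path-is-useful}, $s,t\notin C_{U,F}$ together with the minimality of $R$ force $R$ to avoid the interior of the strip, so $R$ meets $U\cup F$ only along $V(F)$. Writing $R_{s,x}$ and $R_{y,t}$ for the prefix and suffix of $R$, the concatenation
\[
  R' \;=\; R_{s,x}\cdot F[x,u]\cdot P\cdot U[v,y]\cdot R_{y,t}
\]
is a simple $s,t$-path, since $P$'s interior avoids $U\cup F$, $U[v,y]$ meets $F$ only at $y$, and $R_{s,x},R_{y,t}$ avoid the interior of the strip. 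As $R'$ contains $F[x,u]\cdot P$, the latter is useful. This closes the induction, and applying it to the leaves of the decomposition tree proves the lemma.
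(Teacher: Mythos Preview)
Your proof is correct and follows the same inductive scheme as the paper: show that at every node of the recursion both the floor and the ceiling of the current strip are useful as paths. The paper's own argument is terser---it asserts that each of $U_{up},F_{up},U_{down},F_{down}$ is a simple path between the endpoints of the current strip and can therefore be extended to a simple $s,t$-path---whereas you unpack the up-cross case explicitly, invoking Lemma~\ref{lem:forward-path-is-useful} for $P\cdot U[v,y]$ and building the witness $R'=R_{s,x}\cdot F[x,u]\cdot P\cdot U[v,y]\cdot R_{y,t}$ for $F[x,u]\cdot P$; this is exactly the extension the paper alludes to, just written out in full.
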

\begin{proof}
We prove by induction that the decomposition algorithm maintains 
an invariant that the ceiling $U$ and the floor $F$ are useful 
in every recursive call.

For the base case, since $F^*$ is useful, 
Lemma~\ref{lem:forward-path-is-useful} implies that $U^*$ is useful
because it is a forward path w.r.t. $F^*$.
Inductively, assume that the claim holds prior to 
running the decomposition algorithm on a strip $C_{U,F}$.
Thus, $U$ and $F$ are useful.
If the algorithm finds no slicing-path, then we are done.
Otherwise, it finds a slicing-path $P$,
which slices the strip $C_{U,F}$ into
$C_{U_{up},F_{up}}$ and $C_{U_{down},F_{down}}$.
Let $s'$ and $t'$ be the start and end vertices of the strip $C_{U,F}$,
respectively.
Observe that each of the path $U_{up},F_{up},U_{down}$ and $F_{down}$
is a simple $s',t'$-path, regardless of whether the slicing $P$
is a forward path or an up-cross path.
Since $s$ and $t$ are not enclosed in $C_{U,F}$,
we can extend any simple $s',t'$-path $Q$ into a simple $s,t$-path,
which certifies that $Q$ is a useful path.
Therefore, $U_{up},F_{up},U_{down}$ and $F_{down}$
are all useful, and the lemma follows.
\end{proof}

\subsection{Basic Subroutines}
\label{sec:strip-decomposition:subroutines}

Before proceeding, we formally describe the two basic procedures that 
will be used as subroutines in the main algorithms.

\paragraph*{PROCEDURE 1: Slice the strip $C_{U,F}$ by a path $P$.}
This procedure outputs two strips $C_{U_{up},F_{up}},C_{U_{down},F_{down}}$.
We require that $P$ is either a forward-path or an up-cross path.
Let $s',t'$ be the start and end vertices of the strip $C_{U,F}$
(which are end vertices of both $U$ and $F$),
and let $u,v$ be the start and the end vertices of $P$, respectively.
We define the two new strips by defining the new floors and strips as below.
\begin{itemize}
\item {\bf Case 1}: $P$ is a  {\bf forward path}.
      Then we know that $u,v$ are on the floor $F$
      and other vertices of $P$ are not on the boundary of $C_{U,F}$. 
      To define the up-strip, we let $U_{up} := U$ and 
      construct $F_{up}$ by replacing the $u,v$-subpath of $F$ by $P$.
      To define the down-strip, we let $U_{down} := P$ and
      let $F_{down}$ be the $u,v$-subpath of $F$. 
\item {\bf Case 2}: $P$ is an  {\bf up-cross path}. 
      Then we know that $u$ is on the floor $F$ while $v$ is on the ceiling $U$
      and other vertices of $P$ has not on the boundary of $C_{U,F}$. 
      To define the up-strip, we let $U_{up}$ be the $s',u$-subpath of $U$
      and construct $F_{up}$ by concatenating 
      the $s',u$-subpath of $F$ with $P$.
      To define the down-strip, we construct $U_{down}$ by concatenating 
      the path $P$ with the $v,t'$-subpath of $U$,
      and we let $F_{down}$ be the the $u,t'$-subpath of $F$.
\end{itemize}

\paragraph*{PROCEDURE 2: Find a slicing path starting from $u\in F$.}
We require that there are no arcs leaving the strip $C_{U,F}$
(we can temporary remove them).
The algorithm runs recursively in the right first search manner
(recall this is a variant of depth first search where we choose the traverse 
to the right-most arc).
For simplicity, we assume that the first 
arc that will be scanned is the arc leaving $u$ that is next to the floor arc.
Specifically, letting $uw\in F$ be the floor arc and $uw'$ be 
the arc next to $uw$ in counterclockwise order,
the algorithm will ignore $uw$ and start by traversing $uw'$.
We terminate a recursive call when we reach a boundary vertex $v \in U\cup F$.
The algorithm then decides whether the path $P$ is a slicing path.
More precisely, if $v$ appears before $u$
(thus, $P$ is a forward path)
or $v$ is an internal vertex of the ceiling
(thus, $P$ is an up-cross path), 
then the algorithm returns that we found a slicing path $P$
plus reports that $P$ is a forward path or an up-cross path.
Otherwise, we continue the search until we visited 
all the inner vertices reachable from $u$;
in this case, the algorithm returns that 
there is no slicing path.
Notice that any slicing path $P$ found by the algorithm 
is the {\em right most} slicing path starting from $u$.

\subsection{Quadratic-Time Strip-Slicing Decomposition (Proof of Lemma~\ref{lem:quadratic-time-decomposition})}
\label{sec:strip-decomposition:quadratic-time}

\paragraph*{The Main Algorithm.}

Now we describe our $O(n^2)$-time algorithm in more details.
The algorithm reads an input strip $C_{U^*,F^*}$ and then outputs
a collection of minimal strips $\calS$.
Our algorithm consists of initialization and recursive steps
(we may think that these are two separated procedures).

Let $s',t'$ be the start and the end vertices of the input strip $C_{U^*,F^*}$
(i.e., the start and end vertices of the paths $U^*$ and $F^*$).
At the initialization step,
we remove every arc leaving the input strip
and then make sure that the {\em initial strip} has no hanging path by
adding an auxiliary ceiling $(s',x,'t')$ to enclose the input strip,
where $x'$ is a dummy vertex that does not exist in the graph.
Clearly, the modified strip has no hanging path.

Next we run the recursive procedure.
Let us denote the input strip to this procedure by $C_{U,F}$.
We first find a slicing path $P$. 
If there is no such path, 
then we add $C_{U,F}$ to the collection of strips $\calS$.
Otherwise, we found a slicing path $P$,
and we use it to slice the strip $C_{U,F}$ into 
the up-strip $C_{U_{up},F_{up}}$ and 
the down-strip $C_{U_{down},F_{down}}$.
We then make recursive calls on $C_{U_{up},F_{up}}$, $C_{U_{down},F_{down}}$,
and terminate the algorithm (or return to the parent recursive-call).

\paragraph*{Analysis}

The running-time of the algorithm can be analyzed by the following recurrent relation:
\[
T(n) = T(n_1)  + T(n - n_1) + O(n), 
\mbox{where $T(1)=1$}.
\]
This implies the quadratic running-time of the algorithm.
It is clear from the construction 
that any two strips in $\calS$ can intersect only at their boundaries
and that any strip in the collection $\calS$
contains no forward nor up-cross path.
It remains to show that the any strip at the termination has no hanging nor
down-cross path.

Observe that the assertion holds trivially at the first call to the recursive procedure
because the ceiling $U=(s',x,t')$ is an auxiliary ceiling
(we simply have no arc leaving the lone internal vertex of $U$).
We will show that this property holds inductively,
i.e., we claim that if $C_{U,F}$ has no forward hanging nor down-cross path,
then both $C_{U_{up},F_{up}}$ and $C_{U_{down},F_{down}}$ 
also have no forward hanging nor down-cross path.
We consider two cases of the slicing path $P$.

{\bf Case 1: $P$ is a forward path.} Then 
the up-strip $C_{U_{up},F_{up}}$ has no forward hanging path because $U_{up}=U$
(but, a backward hanging path may exist),
and $C_{U_{up},F_{up}}$ has no down-cross path because, 
otherwise, any such path together with a subpath of $P$
would form a down-cross path in $C_{U,F}$.

For the down-strip $C_{U_{down},F_{down}}$, 
we know that $P=F_{U_{down}}$ and that $P$ is the right-most path
w.r.t. to $F_{up}$.
This means that there is no forward hanging path w.r.t. $U_{down}$
simply because the hanging path $Q$ by definition
must lie on the right of $U_{down}=P$, which contradicts
the fact that $P$ is the right-most forward path
(we can replace a subpath of $P$ by $Q$ to
form a forward path right to $P$).
Similarly, any down-cross path w.r.t. $C_{U_{down},F_{down}}$
together with a subpath of $P$ would form a forward path right to $P$
in $C_{U,F}$, again a contradiction.

{\bf Case 2: $P$ is an up-cross path.} Then the up-strip $C_{U_{up},F_{up}}$
has no forward hanging path simply because $U_{up}$ is a subpath of $U$.
It is also not hard to see that 
$C_{U_{up},F_{up}}$ has no down-cross path.
Suppose not.
Then we have a down-cross path $Q$ that
starts from some internal vertex $x$ of $U_{up}$
and ends at some vertex $y$ in $F_{up}$.
We have two cases:
(1) if $y\in F$, then the path $Q$ is also a down-cross path in $C_{U,F}$ 
and
(2) if $y\in P$, then the path $Q$ together with a subpath of $P$
form a forward hanging path in $C_{U,F}$.
Both cases lead to contradiction.

Next consider the down-strip $C_{U_{down},F_{down}}$.
The ceiling $U_{down}$ consists of the slicing path $P$ and
a subpath of $U$.
Suppose there is a down-cross path $Q$ in $C_{U_{down},F_{down}}$.
Then we know that $Q$ must go from $P$ to $F_{down}=U$.
But, this means that we can form a forward path right to $P$ 
by concatenating a subpath of $P$ with $Q$,
which contradicts the fact that $P$ is the right-most slicing path.
Suppose there is a forward hanging path $Q$ in  $C_{U_{down},F_{down}}$.
Then it must start from some vertex $v$ in $P$
and ends on either $P$ or the subpath of $U$ in $U_{down}$.
But, both cases imply that there exists a slicing path $P'$
that lie right to $P$, which contradicts our choice of $P$.
\qed

\subsection{Linear Time Implementation}
\label{sec:strip-decomposition:linear-time}

Now we present a linear-time implementation of our
strip-slicing decomposition algorithm and thus prove
Lemma~\ref{lem:linear-time-decomposition}.
The algorithm is similar to the one with quadratic running-time
except that we keep the information of vertices that have already been scanned.
Here the recursive procedure has an additional
parameter $v^*$, which is a pointer to a vertex on the floor $F$ that 
we use as a starting point in finding the slicing path $P$.

To be precise, our algorithm again consists of two steps.
Let us denote the input strip by $C_{U^*,F^*}$
and denote its start and end vertices by $s'$ and $t'$, respectively.
At the initialization step, 
we remove all arc leaving $C_U^*,F^*$ and 
add an auxiliary ceiling $(s',x',t')$ to enclose the input strip
($x'$ is a dummy vertex).
Then we set the starting point $v^*:=s'$ and call the recursive procedure
with parameter $C_{U,F}$ (the modified strip) and $v^*$.

In the recursive procedure, 
we first order vertices on the floor $F$ by $v_1,\ldots,v_{\ell}$
(note that $v_1=s'$ and $v_{\ell}=t'$).
Let $i^*$ be the index such that $v_{i^*}=v^*$.
For each $i=i^*$ to $\ell$ in this order,
we find a slicing path starting from $v_i$
(using the right-first-search algorithm).
If we cannot find a slicing path starting from $v_i$,
then we remove all the inner vertices and arcs
scanned by the search algorithm
and iterate to the next vertex (i.e., to the vertex $v_{i+1}$).
Otherwise, we found a slicing path $P$ and exit the loop. 
Note that the loop terminates under two conditions.
Either the search algorithm finds no slicing path
or
it reports that there exists a slicing path $P$ starting from a vertex $v_i$.
For the former case, we add $C_{U,F}$ to the collection of strips $\calS$.
For the latter case, 
we remove all the inner vertices and arcs scanned by 
the search algorithm except those on $P$,
and we then slice the strip $C_{U,F}$ into 
the up-strip $C_{U_{up},F_{up}}$ and
the down-strip $C_{U_{down},F_{down}}$.
Finally, we make two calls to the recursive procedure 
with parameters $(C_{U_{up},F_{up}},v_i)$
and $(C_{U_{down},F_{down}},v_i)$, respectively.
In particular, we recursively run the procedure on the up and down strips,
and we pass the vertex $v_i$ as the starting point in finding slicing paths.

\subsubsection{Implementation Details}

Here we describe some implementation details that 
affect the running-time of our algorithm.

\paragraph*{Checking whether a vertex is in the floor or ceiling.}
Notice that checking whether a vertex is in the the floor $F$ or 
the ceiling $U$ can be an issue since the up-strip and the down-strip
share boundary vertices.
A trivial implementation would lead to storing the same vertex in $O(n)$
data structures, and the algorithm ends up running in $O(n^2)$-time.
We resolve this issue by using a coloring scheme.
We color vertices on the floor $F$ by red and 
the vertices on the ceiling $U$ by green.
(Note that we color only the internal vertices of $U$ 
since $U$ and $F$ share the start and end vertices).
We color other vertices by white.
Now we can easily check if a vertex is in the floor or ceiling by
checking its color.
When we make a recursive call to the up-strip and the down-strip.
We have to maintain the colors on the slicing path $P$
to make sure that we have a right coloring.
More precisely, when we make a recursive call to the up-strip
we color internal vertices of $P$ by red 
(since they will be vertices on the floor $F_{up}$)
and then flip their colors to green later when we make a recursive call to
the down-strip
(since they will be vertices on the ceiling $U_{down}$).
Observe that internal vertices of the slicing path $P$ are internal vertices 
of the strip $C_{U,F}$, which will be boundary vertices
of the up and down strips in the deeper recursive calls. 
As such, any vertices will change colors at most three times
(from white to green and then to red).

\paragraph*{Checking if a path is a forward path.}
Another issue is when we search for a slicing path $P$
and want to decide whether the path we found is a forward path or a backward path.
We simply solve this issue by removing all the incoming arcs of vertices 
$v_{i'}$ for all $0 \leq i'< i$.
This way any path we found must be either a forward path or  an up-cross path. 

\paragraph*{Slicing a Strip.}
Observe that we may create $O(n)$ strips
(and thus have $O(n)$ floors and ceilings)
during the run of the decomposition algorithm.
However, we cannot afford to store floors and ceilings of
every strip separately as they are not arc disjoint.
Otherwise, just storing them would require a running time of
$O(n^2)$.
To avoid this issue, we keep paths (either floor or ceiling) as
doubly linked lists.
Consequently, we can store $C_{U,F}$ by referencing to
the first and last arcs of the floor (respectively, ceiling).
Moreover, since we store paths as doubly linked list,
each {\em cut-and-join} operation can be done in $O(1)$ time.

\paragraph*{Maintaining the search to be inside $C_{U,F}$.}
As we may have an arc going leaving $C_{U,F}$,
we have to ensure that the right-first-search algorithm would not 
go outside the strip.
(Note that our algorithm does not see the real drawing of $C_{U,F}$
on a plane and thus cannot decide whether an arc is in a strip or not.)
To do so, we temporary remove the arcs leaving $C_{U,F}$ and 
put them back after the procedure terminates. 
Observe that the inner parts of up-strip $C_{U_{up},F_{up}}$ and $C_{U_{down},F_{down}}$
are disjoint.
Thus, each arc leaving $C_{U,F}$ is removed and put back at most once.

\subsubsection{Correctness}
\label{sec:strip-decomposition:linear-time:correctness}

For the correctness, it suffices the show that  
the slicing path $P$ that we found in every step is exactly the same 
as what we would have found by the quadratic-time algorithm.
We prove this by reverse induction.
The base case, where we call the procedure on the initial strip $C_{U^*,F^*}$,
holds trivially.
For the inductive step,
we assume the that both implementations found the same slicing paths so far.
Now if the recursive procedure found no slicing path, then we are done.
Otherwise, we found a slicing path $P$.
Let us order vertices on the floor $F$ by $v_1,\ldots,v_{\ell}$,
and let $v_i$ be the start vertex of $P$.
First, consider the up-strip $C_{U_{up},F_{up}}$.
Clearly, $C_{U_{up},F_{up}}$ has no up-cross path starting from
any vertex $v_{i'}$ with $0 \leq i' < i$.
We also know that $C_{U_{up},F_{up}}$ 
has no forward path $Q$ starting from $v_{i'}$;
otherwise, the path $Q$ is either a forward path in $C_{U,F}$
starting at $v_{i'}$ (if $Q$ starts and ends on $U$)
or we can form another slicing path $P'$ starting from $v_{i'}$ 
by concatenate $Q$ with a subpath of $P$.
Moreover, all the vertices that scanned by running  
the right-first-search algorithm from $v_{i}$ must be 
enclosed in the down-strip.
Thus, we conclude that removing all scanned vertices (those that are not on $P$)
does not affect the choice of slicing paths in the up-strip.

Next consider the down-strip.
In any case, we know that the slicing path $P$ 
becomes a subpath of the ceiling of $C_{U_{down},F_{down}}$.
Clearly, any path $Q$ from $P$ to $F_{down}$ would either form 
a forward path $P'$ lying right to $P$ or 
would imply that $C_{U,F}$ has a down-cross path
(this is the case when $Q$ is a subpath of a longer path 
that goes from $U$ to $F_{down}$),
which are contradictions in both cases.
Thus, we can remove every vertex that is scanned by
the right-first-search algorithm and is not on the slicing path $P$
without any effect on the choices of slicing paths in deeper recursive calls.
To conclude, our linear-time and quadratic-time implementations
have the same choices of slicing paths and must produce the same outputs.
The correctness of our algorithm is thus implied by
Lemma~\ref{lem:quadratic-time-decomposition}.

\subsubsection{Running Time Analysis}
\label{sec:strip-decomposition:linear-time:running-time-analysis}

Clearly, the initialization step of the decomposition algorithm runs
in linear-time.
Thus, it suffices to consider the running time of the recursive procedure.
We claim that each arc is scanned by our algorithm at most $O(1)$ times,
and each vertex is scanned only when we scan its arcs.
This will immediately imply the linear running time.
We prove this by induction.
First, note that we remove all the internal vertices and arcs 
that are scanned by the search algorithm
but are not on the slicing path $P$ (if it exists),
which means that these vertices and arcs are scanned only once
throughout the algorithm.
Thus, if we found no slicing path, then the 
decomposition runs in linear time on 
the number of arcs in the strip.
Otherwise, we found a slicing path $P$ starting at some vertex $v_i\in F$.
The recursive procedure slices the strip $C_{U,F}$
in such a way that the two resulting strips $C_{U_{up},F_{up}}$
and $C_{U_{down},F_{down}}$ have no inner parts in common.
Thus, we can deduce inductively that 
the running time incurred by scanning inner arcs and vertices
of the two strips are linear on the number of arcs in the strips.
Moreover, since we pass $v_i$ as the starting point of iterations 
in deeper recursive calls, we can guarantee that
any vertices $v_{i'}\in F$ with $i < i'$
(those that appear before $v_i$) will never be scanned again.
We note, however, that we may scan the vertex $v_i$ multiple times
because some arcs leaving $v_i$ that lie on the left of $P$
have never been scanned.
Nevertheless, we can guarantee that the number of times that we scan $v_i$ 
is at most the number of its outgoing arcs.
To see this, first any arc that does not lead to a slicing path will be removed.
For the arc that leads to a slicing path, we know that it will become 
an arc on the boundaries of up and down strips 
and thus will never be scanned again.
In both cases, we can see that the number of times we scan $v_i$ 
is upper bounded by the number of its outgoing arcs.
Therefore, the running time of our algorithm is linear on the number of
arcs in the input strip,
completing the proof of Lemma~\ref{lem:linear-time-decomposition}.

\qed
\section{The Outside Case: Determining useless arcs when $s$ is not
         in the component}
\label{sec:st-outside}

In this section, we describe an algorithm that determines all useful arcs of
a strongly connected component $C$ when the source $s$ is in the infinite face.

We first outline our approach.
Consider a strongly connected component $C$.
If the boundary of $C$ is not a cycle, then the strong connectivity
implies that we can extend some path on the boundary to form
a clockwise cycle, which contradicts to our initial assumption
that the embedding has no clockwise cycle.

If all arcs in the boundary of $C$, denoted by $Q$, are useless, then
no arcs of $C$ are useful.
Otherwise, we claim that some arcs of $Q$ are useful and some
are useless. Moreover, these arcs partition $Q$ into two paths
$Q_1$ and $Q_2$ such that $Q_1$ is a useful path
and all arcs of $Q_2$ are useless.

\begin{lemma}
\label{lem:one-useful}
Consider the boundary of a strongly connected component $C$.
Let $Q$ be the cycle that forms the boundary of $C$.
Then either all arcs of $C$ are useless or
there are non-trivial paths $Q_1$ and $Q_2$ such that
\begin{itemize}
  \item $E(Q)=E(Q_1)\cup E(Q_2)$.
  \item $Q_1$ is a useful path.
  \item All arcs of $Q_2$ are useless.
\end{itemize}
Moreover, there is a linear-time algorithm that computes $Q_1$ and $Q_2$.
\end{lemma}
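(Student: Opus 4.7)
The plan is to characterize useful boundary arcs via the entrances and exits of $C$ and then to combine planarity with the CCW orientation of $Q$ (Lemma~\ref{lem:boundary-is-ccw}) and the degree-three condition to show that the useful boundary arcs form a single contiguous CCW segment of $Q$.

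First, entrances and exits must lie on $Q$: since $s$ is in the infinite face, an $s,v$-path that avoids $V(C)\setminus\{v\}$ must enter the region bounded by $Q$ through $v$, forcing $v\in V(Q)$. By the degree-three condition each boundary vertex has a single non-boundary (``third'') arc, and its direction classifies the vertex as a potential entrance (third arc incoming external), a potential exit (third arc outgoing external), or an internal type (third arc inside $C$); no vertex is ever both an entrance and an exit. A BFS from $s$ on the subgraph external to $C$ marks the true entrances and a symmetric reverse BFS from $t$ marks the true exits. If either set is empty, no simple $s,t$-path can enter $C$ and every arc of $C$ is useless.

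Otherwise, pick any entrance $u$ and any exit $w$; then $u\ne w$, Lemma~\ref{lem:disjoint-entrance-exit-paths} provides disjoint external pieces, and strong connectivity of $C$ supplies a simple $u,w$-path $R$ inside $C$. Because $u$'s third arc is incoming external, $u$'s only outgoing arc in $G$ is the boundary arc $u\to u^{+}$ on $Q$, so $R$ begins with this arc, which is therefore useful; symmetrically the boundary arc entering $w$ is useful. The main obstacle is proving contiguity, which I would establish by contracting $C$ to a single vertex $c$ and applying Euler's formula to the resulting multigraph on $\{s,t,c\}$: because $s$ and $t$ share the outer face, the $p$ incoming and $q$ outgoing external arcs at $c$ must each appear as a single cyclic block around $c$ (any other arrangement produces fewer than the $p+q-1$ faces required by $V-E+F=2$). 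Walking CCW around $Q$ we thus encounter a contiguous entrance block, a contiguous exit block, and a complementary ``gap'' of internal-type vertices containing no entrance or exit.

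Let $u^{*}$ be the first entrance (CCW) after the gap and $w^{*}$ the last exit (CCW) before it. Define $Q_1$ as the CCW subpath of $Q$ from $u^{*}$ to $w^{*}$, and $Q_2$ as its complement. For each arc on $Q_1$, a simple $s,t$-path through it is built by choosing an entrance before and an exit after the arc (both guaranteed to exist by the block structure) and stitching with an internal $C$-path; the uniqueness of the ``third arcs'' at entrances and exits forces the internal paths to avoid revisits. Any arc on $Q_2$ is useless, because reaching its tail from an entrance and then extending to an exit would require wrapping around $Q_1$ and revisiting $u^{*}$ or $w^{*}$. Non-triviality of $Q_2$ follows since no simple $u^{*},w^{*}$-path can cover the entire cycle $Q$. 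The linear-time algorithm then runs the two BFS traversals plus a single CCW walk around $Q$ to identify $u^{*}$ and $w^{*}$; each arc is touched $O(1)$ times, so the total work summed across all strongly connected components is $O(n)$.
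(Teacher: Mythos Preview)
Your overall plan—locate entrances and exits on $Q$, show they occur in two contiguous blocks, take $Q_1$ as the CCW segment from the first entrance $u^{*}$ to the last exit $w^{*}$—is exactly the paper's proof. One simplification you miss: since $Q_1$ is itself a subpath of the boundary cycle, the concatenation $P_{u^{*}}\cdot Q_1\cdot P_{w^{*}}$ (entrance path, boundary segment, exit path) is already a simple $s,t$-path by Lemma~\ref{lem:disjoint-entrance-exit-paths}, so the whole path $Q_1$ is useful at once; there is no need to argue arc-by-arc or to invoke any ``internal $C$-path.''

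The genuine gap is your contiguity argument. Contracting $C$ to a single vertex $c$ does \emph{not} yield a multigraph on $\{s,t,c\}$: every vertex of $G$ outside $C$ survives the contraction, so the Euler computation with $V=3$ and $E=p+q$ simply does not apply. To salvage this route you would have to further contract the external $s$-reachable part and the external $t$-reaching part into single vertices, argue that these contractions keep the graph planar, that the cyclic order of edges at $c$ still agrees with the cyclic order of entrances and exits along $Q$, and that $s$ and $t$ land on a common face—none of which is automatic, and the face-count inequality you state (``any other arrangement produces fewer than $p+q-1$ faces'') is not the right invariant even then (interleaving breaks planarity of the embedding rather than reducing the face count).

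The paper sidesteps all of this with a two-line enclosure argument: if entrances $u_1,u_2$ and exits $v_1,v_2$ interleave along $Q$, the two shortest $s,u_i$ entrance paths together with the boundary segment from $u_1$ to $u_2$ enclose the exit $v_1$; hence any $v_1,t$ exit path must cross one of the entrance paths, contradicting Lemma~\ref{lem:disjoint-entrance-exit-paths}. That replaces your Euler step cleanly and is the argument you should use.
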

\begin{proof}
  First, we claim that all arcs of $C$ are useless
  if and only if $Q$ has no entrance or has no exit.

  If $Q$ has no entrance or has no exit, then
  it is clear that all arcs of $C$ are useless
  because there can be no simple $s,t$-path using an edge of $C$.

  Suppose $Q$ has at least one entrance $u$ and
  at least one exit $v$.
  We will show that $Q$ has both useful and useless arcs, which
  can be partitioned into two paths as in the statement of the lemma.

  Observe that no vertices of $Q$ can be both entrance and exit
  because of the degree-three assumption.
  (Any vertex that is both entrance and exit must have degree at least
  four, two from $Q$ and two from the entrance and the exit paths.)

  For any pair of entrance and exit $u,v$,
  let $Q_{uv}$ be the $u,v$-subpath of $Q$.
  We claim that $Q_{uv}$ is a useful path.
  To see this, we apply
  Lemma~\ref{lem:disjoint-entrance-exit-paths}.
  Thus, we have an $s,u$-path $P_u$ and
  a $v,t$-path $P_v$ that are vertex disjoint.
  Moreover, $P_u$ and $P_v$ contain no vertices of $Q$ except
  $u$ and $v$, respectively.
  Thus, $R=P_u\cdot Q_{uv}\cdot P_v$ form a simple $s,t$-path,
  meaning that $Q_{uv}$ is a useful path.

  Notice that $Q$ has no useless arc only if
  $Q$ has two distinct entrances $u_1,u_2$ and two
  distinct exits $v_1,v_2$ that appear in $Q$ in interleaving
  order $(u_1,v_1,u_2,v_2)$.
  Now consider a shortest $s,u_1$-path $P_{u_1}$,
  a shortest $s,u_2$-path $P_{u_2}$,
  and the $u_1,u_2$-subpath $Q_{u_1,u_2}$ of $Q$.
  These three paths together encloses any path that leaves
  the component $C$ through an exit $v_1$.
  Thus, any $v_1,t$-path must intersect either $P_{u_1}$ or
  $P_{u_2}$, contradicting
  Lemma~\ref{lem:disjoint-entrance-exit-paths}.

  Consequently, a sequence of entrances and exits
  appear in consecutive order on $Q$.
  Let us take the first entrance $u^*$ and the last exit $v^*$.
  We know from the previous arguments that any pair of entrance
  and exit yields a useful subpath of $Q$.
  More precisely, the $u^*,v^*$-subpath $Q_1$ of $Q$ must be
  useful and must contain all the entrances and exits
  because of the choices of $u^*$ and $v^*$.
  Consider the other subpath -- the $v^*,u^*$-subpath $Q_2$ of $Q$.
  The only entrance and exit on $Q_2$ are $u^*$ and $v^*$,
  respectively.
  Thus, any path $s,t$-path $P$ that contains an arc $e$ of $Q_2$ must
  intersect with the path $Q_1$.
  But, then the path $P$ together with the boundary $Q$
  enclose the region of $C$ that has no exit.
  Thus, $P$ cannot be a simple path.
  It follows that no arcs of $Q_2$ are useful.

  To distinguish the above two cases, it suffices to compute all the
  entrances and exits of $Q$ using a standard graph searching
  algorithm. The running time of the algorithm is linear, and we need
  to apply it once for all the strongly connected components.
  This completes the proof.
\end{proof}

%

\subsection{Algorithm for The Outside Case}
\label{sec:algo-outside-case}

Now we present our algorithm for the outside case.
We assume that some arcs in the boundary of $C$ are useful and some
are useless.

Our algorithm decides whether each arc in $C$ is useful or useless
by decomposing $C$ into a collection of minimal strips
using the algorithm in Section~\ref{sec:strip-decomposition}.
Then any arc enclosed inside a minimal strip is useless
(only boundary arcs are useful).
Thus, we can determine the usefulness of arcs in each strip.
Observe, however, that the component $C$ is not a strip
because it is enclosed by a cycle instead of two paths
that go in the same direction.
We transform $C$ into a strip as follows.
First, we apply the algorithm as in Claim~\ref{lem:one-useful}.
Then we know the boundary of $C$, which is a cycle
consisting of two paths $Q_1$ and $Q_2$,
where $Q_1$ is a useful path and $Q_2$ is a useless path.
Let $F^*=Q_1$, and call it the {\em lowest-floor}.
Note that $F^*$ starts at an entrance $s_C$ and
ends at an exit $t_C$.
We add to $C$ a dummy path $U^* = (s_C,u^*,t_C)$,
where $u^*$ is dummy vertex,
and call $U^*$ the {\em top-most ceiling}, which is a dummy ceiling.
This transforms $C$ into a strip, denoted by $C_{U^*,F^*}$.
(See Figure~\ref{fig:s-outside-init-strip})

\begin{figure}
  \centering
  \includegraphics{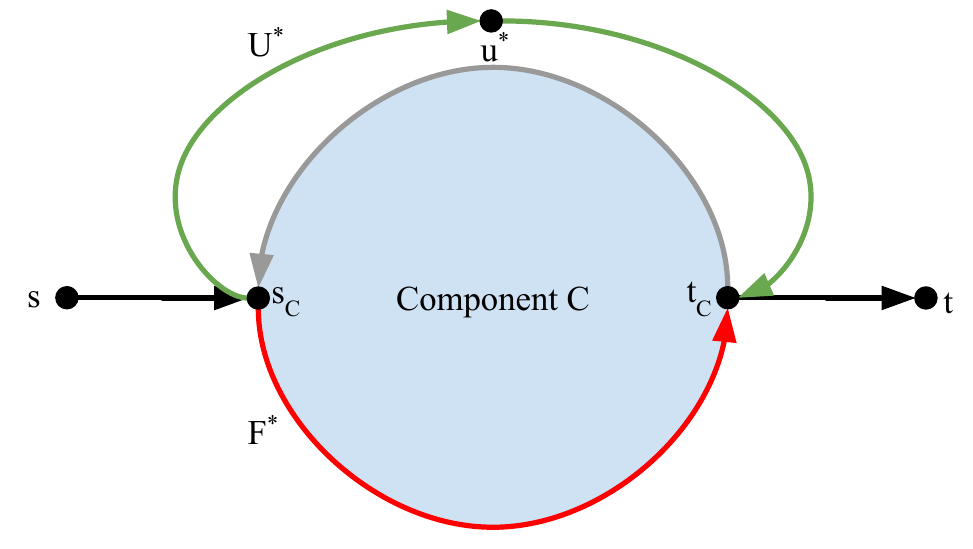}
  \caption{The initial strip before the decomposition.}
  \label{fig:s-outside-init-strip}
\end{figure}

Now we are able to decompose $C_{{U^*},{F^*}}$ into
a collection $\calS$ of minimal strips by calling
the strip-slicing decomposition on $C_{{U^*},{F^*}}$.
Since $F^*$ is a useful path, we deduce from
Corollary~\ref{cor:ext-forward-path-is-useful}
that all the strips $C_{U,F}\in \calS$ are
such that both $U$ and $F$ are useful paths.

Next we show that no inner arcs of a strip $C_{U,F}\in\calS$
are useful, which then implies only arcs on the boundaries
of strips computed by the strip-slicing decomposition are useful.
Therefore, we can determine all the useful arcs in $C$ in
linear-time.

\begin{lemma}
\label{lem:outside-no-useful-backward}
Consider a strip $C_{U,F}\in\calS$ computed by
the strip-slicing decomposition algorithm.
Then no inner arc $e$ of $C_{U,F}$ is useful.
More precisely, all arcs in $e\in E(C_{U,F})\setminus E(U\cup F)$
are useless.
\end{lemma}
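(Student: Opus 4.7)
The plan is to argue by contradiction. Assume that some inner arc $e\in E(C_{U,F})\setminus E(U\cup F)$ is useful, so there is a simple $s,t$-path $P$ containing $e$. Consider the maximal sub-walk $L$ of $P$ that lies inside the strip and contains $e$: it starts and ends on $U\cup F$ while all its intermediate vertices are inner, so $L$ is a link of $C_{U,F}$. Because $C_{U,F}$ is minimal (Lemma~\ref{lem:linear-time-decomposition}), $L$ is neither forward, up-cross, down-cross, nor forward hanging; hence $L$ is either a backward link from some $v_2\in V(F)$ to some $v_1\in V(F)$ with $v_1$ preceding $v_2$ on $F$, or a backward hanging link on $U$. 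Both $F$ and $U$ are useful by Lemma~\ref{lem:properties-strips}, so the two cases are symmetric under the swap $F\leftrightarrow U$; I concentrate on the first one.

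I would then exploit the preprocessed degree-three condition. The three arcs incident to $v_2$ are $v_2^- v_2$ and $v_2 v_2^+$ on $F$ together with the non-$F$ outgoing arc that starts $L$; consequently $v_2$ has the unique incoming arc $v_2^- v_2$, forcing the $s,v_2$-subpath $P^s$ of $P$ to end with this $F$-arc. Symmetrically at $v_1$, the $v_1,t$-subpath $P^t$ must begin with the unique outgoing arc $v_1 v_1^+$ on $F$. Iterating the same degree-three analysis while walking back from $v_2$ and forward from $v_1$ along $F$, one shows that $P^s$ follows $F$ in reverse until it first enters $F$ at some vertex $w_0$ via a non-$F$ incoming arc, and that $P^t$ follows $F$ forward up to some vertex $w'$ before leaving via a non-$F$ outgoing arc. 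The disjointness $V(P^s)\cap V(P^t)=\emptyset$ combined with simplicity of $P$ forces the strict ordering $v_1<v_1^+\leq w'<w_0\leq v_2^-<v_2$ along $F$.

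To finish, I invoke planarity. By the no-clockwise-cycle assumption of the embedding (used also in Lemma~\ref{lem:boundary-is-ccw}), the cycle $F_{v_1,v_2}\cdot L$ is counterclockwise and encloses a region $D$ contained in the strip interior, with $s,t\notin D$. Minimality of the strip implies that the non-$F$ arcs used by $P^s$ at $w_0$ and by $P^t$ at $w'$ cannot lead into $D$: any maximal sub-walk of $P$ entering $D$ would, once traced to the boundary, produce a forward, up-cross, or forward hanging link, contradicting minimality. Therefore both of these arcs must exit $F$ on the side opposite $D$, placing the outside portions of $P^s$ and $P^t$ in the same planar region on the far side of $F$. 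These two portions then realise two paths with endpoints $s\to w_0$ and $w'\to t$ that interleave in the cyclic order $s,w',w_0,t$ around the outer boundary of that region; the planar two-disjoint-paths obstruction forbids vertex-disjoint realisations of such interleaving endpoints, contradicting simplicity of $P$.

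The main obstacle I expect is controlling the flexibility of the third arcs at intermediate vertices of $F_{v_1,v_2}$: these can point either into the strip interior or outside it, so $P^s$ and $P^t$ could in principle take additional nested backward detours before finally reaching $w_0$ or leaving at $w'$. The cleanest resolution is probably to phrase the last step as a parity argument on how many times $P$ crosses the closed counterclockwise curve $F_{v_1,v_2}\cdot L$: the three forced arcs $v_2^- v_2$, $L$, and $v_1 v_1^+$ together contribute an unmatchable crossing pattern through this curve, and simplicity of $P$ together with $s,t$ lying outside the curve rules out any compensating second crossing.
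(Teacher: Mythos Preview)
Your setup through isolating a link $L$ in $C_{U,F}$ is fine, but two genuine gaps remain.

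First, the symmetry claim is wrong. The roles of $F$ and $U$ are not interchangeable, because the no-clockwise-cycle hypothesis is what ultimately drives the contradiction and it is not invariant under that swap. In fact the backward-hanging case is a one-liner on its own: if a link $L'$ runs from $u\in U$ backward to $v\in U$ inside the strip, then $U_{v,u}\cdot L'$ is already a clockwise cycle (this is exactly Case~1 of the paper's argument, with $U$ in place of $F$). So that case should be dispatched directly, not by symmetry.

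Second, and more seriously, the main argument for a backward link on $F$ does not close. The degree-three step at $v_2$ and at $v_1$ is correct (provided $v_2\neq t'$ and $v_1\neq s'$, cases you do not treat), but the ``iteration'' along $F$ fails: at an interior floor vertex whose third arc is incoming, $P^s$ is no longer forced to arrive along $F$, so you cannot conclude that $P^s$ walks down $F$ to some well-defined $w_0$. You flag this, but the proposed parity fix is only a sketch. Since $P$ shares all of $L$ and the adjacent floor arcs with the curve $\gamma=F_{v_1,v_2}\cdot L$, ``number of crossings'' has to be replaced by an accounting of on-curve segments versus off-curve excursions, and you have not shown that the three forced arcs yield an unmatchable pattern once further nested backward links are allowed. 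The final interleaving / two-disjoint-paths obstruction is likewise asserted rather than proved; in particular you have not established that the outside portions of $P^s$ and $P^t$ really land in the same planar region.

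The paper avoids all of this by bringing in a second reference path: from Lemma~\ref{lem:properties-strips} it takes a simple $s,t$-path $R$ containing \emph{all of} $F$, and then analyses how $P$ enters and leaves the strip relative to $R$. The degenerate endpoint positions ($v=s'$ or $u=t'$) are handled by short direct degree arguments, and in the generic case a subpath of $P$ together with a subpath of $R$ is shown to bound a clockwise cycle, contradicting the preprocessing. The idea you are missing is precisely this auxiliary path $R$: once you have it, you no longer need to control how $P$ meanders along the floor, because the contradiction is produced globally against $R$ rather than by a local parity count on $\gamma$.
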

\begin{proof}
Suppose for a contradiction that there is an inner arc $e$ of $C_{U,F}$
that is useful.
Then there is an $s,t$ path $P$ containing $e$.
We may also assume that $P$ contains a link $Q$.
That is, $Q$ starts and ends at some vertices $u$ and $v$ in
$V(U\cup F)$, respectively, and $Q$ contains no arcs of $E(U\cup F)$.
By Lemma~\ref{lem:linear-time-decomposition}, $Q$ must be a backward path.
Hence, it suffices to show that any backward path inside the strip
$C_{U,F}$ is useless.

By the definition of the backward path $Q$,
$v$ must appear before $u$ on $F$,
and by Lemma~\ref{lem:properties-strips}, $F$ is a useful path, meaning that
there is an $s,t$-path $R$ containing $F$.
Let $s'$ and $t'$ be the start and end vertices of
the strip $C_{U,F}$
(i.e., $s'$ and $t'$ are
the common start and end vertices of the paths $U$ and $F$,
respectively).
Then we have four cases.

\begin{itemize}
\item {\bf Case 1: $v=s'$ and $u=t'$.}
  In this case, the path $U$ and $Q$ form a clockwise cycle,
  a contradiction.

\item {\bf Case 2: $v=s'$ and $u\neq t'$.}
  In this case, $v$ has degree three in $Q\cup U\cup F$.
  Thus, the path $R\supseteq F$ must start at the same vertex as $F$;
  otherwise, $v$ would have degree at least four.
  This means that $s=s'$, but then $P$ could not be a simple
  $s,t$-path because $s$ must appear in $P$ at least twice,
  a contradiction.

\item {\bf Case 3: $v\neq s'$ and $u=t'$.}
  This case is similar to the former one.
  The vertex $u$ has degree three in $Q\cup U\cup F$.
  Thus, the path $R\supseteq F$ must end at the same vertex as $F$;
  otherwise, $u$ would have degree at least four.
  This means that $t=t'$, but then $P$ could not be a simple
  $s,t$-path because $t$ must appear in $P$ at least twice,
  a contradiction.

\item {\bf Case 4: $u\neq t'$ and $v\neq s'$.}
  Observe that $P$ must enter and leave $C_{U,F}$ on
  the right of $F$ (and thus $R$)
  at some vertices $a$ and $b$, respectively.
  We may assume wlog that the $a,u$-subpath
  and the $v,b$-subpath of $P$ are contained in $R$.
  Moreover, it can been seen that
  $u$ and $v$ appear in different orders on
  the path $R$ (that contains $F$) and
  the path $P$ (that contains $Q$).
  Thus, $P$ must intersect either
  the $s,b$-subpath of $R$, say $R_{s,b}$, or
  the $a,t$-subpath of $R$, say $R_{a,t}$.
  (Note that $R$ contains no inner vertices of $Q$
  by the definition of a backward path.)

  Suppose $P$ intersects $R_{a,t}$ at some vertex $x$,
  and assume that $x$ is the last vertex of $P$ in
  the intersection of $P$ and $R_{a,t}$.
  Since $P$ enters $R$ at the vertex $a$ from the right,
  the $x,a$-subpath of $P$ must lie on the right of $R$.
  But, then the union of the $x,a$-subpath of $P$ and
  the $a,x$-subpath of $R$ forms a clockwise cycle,
  a contradiction.
  (See Figure ~\ref{fig:outside-no-usefull-backward-4-1}
  for illustration.)

  Now suppose $P$ intersects $R_{s,b}$ at some vertex $y$,
  and assume that $y$ is the first vertex of $P$ in
  the intersection of $P$ and $R_{s,b}$.
  Since $P$ leaves $R$ at the vertex $b$ from the right,
  the $b,y$-subpath of $P$ must lie on the right of $R$.
  But, then the union of the $b,y$-subpath of $P$ and
  the $y,b$-subpath of $R$ forms a clockwise cycle,
  again a contradiction.
\end{itemize}
Therefore, all backward paths inside the strip $C_{U,F}$ are useless
and so do inner arcs of $C_{U,F}$.
\end{proof}

\begin{figure}
  \centering
  \includegraphics[width=4in]{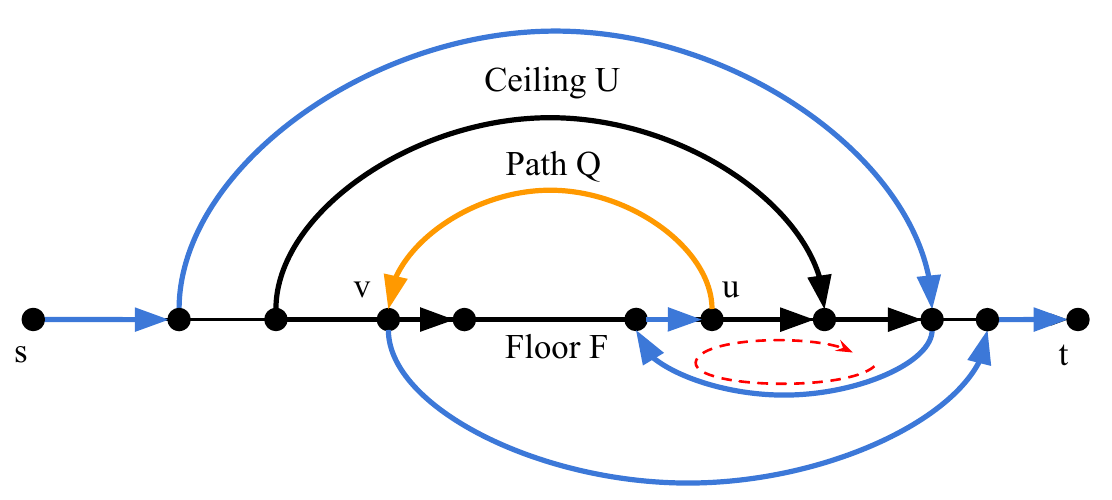}
  \caption{An illustration of Case~4 in the proof of
           Lemma~\ref{lem:outside-no-useful-backward}}
  \label{fig:outside-no-usefull-backward-4-1}
\end{figure}

\section{The Inside Case: Determining useful arcs of a strongly
  connected component when $s$ is inside of the component}
\label{sec:s-inside}

In this section, we describe an algorithm for the case that
the source vertex $s$ is in the strongly connected component $C$.
In this case, the source vertex $s$ is enclosed in some face $f_s$ in
the component $C$.  It is possible that $s$ is also enclosed
nestedly in other strongly connected components (see
Figure~\ref{fig:component-example}).  We will perform a simple
reduction so that (1) the source $s$ is enclosed inside the component $C$,
(2) $s$ has only one outgoing arc, and
(3) every vertex except $s$ (and $t$) has degree three:
First, we contract the maximal component containing $s$ inside $f_s$
into $s$.
Now the source $s$ has degree more than one in a new graph.
We thus add an arc $s's$ and declare $s'$ as a new source vertex.
Then we replace $s$ by a binary tree on $d$ leaves to maintain
the degree-three requirement.
See Figure~\ref{fig:inside-reduction}.

\begin{figure}
  \centering
  \includegraphics[width=\textwidth]{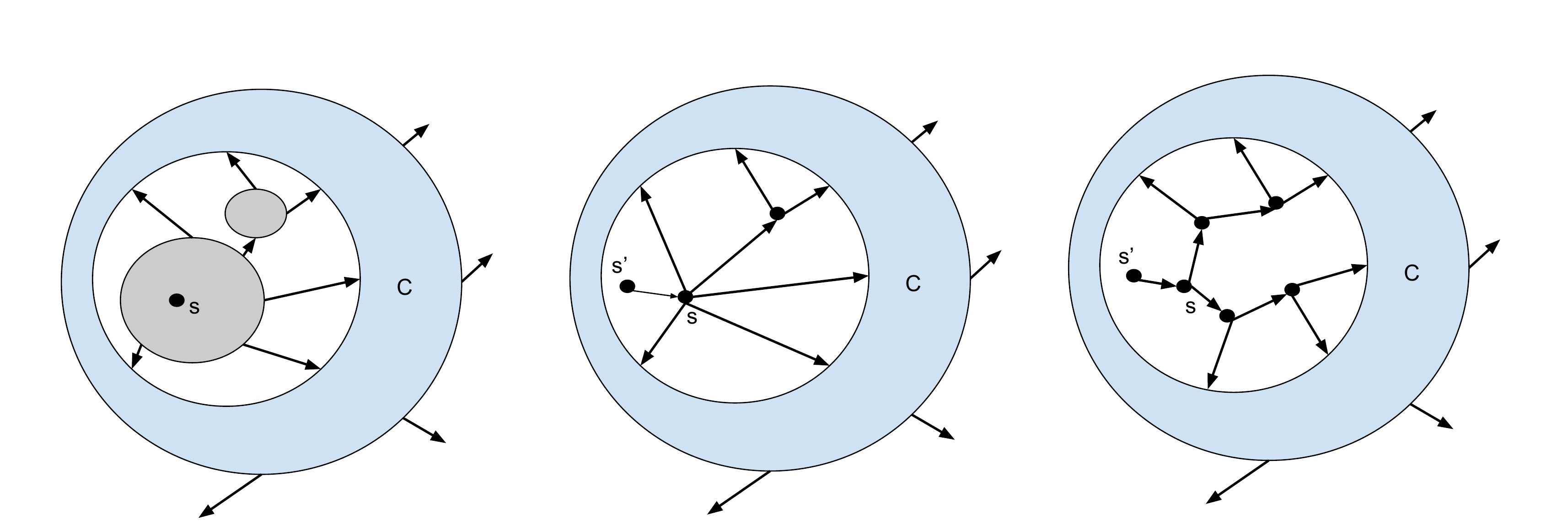}
  \caption{A reduction for simplifying the source structure for the
    inside case.}
  \label{fig:inside-reduction}
\end{figure}

\subsection{Overview}

We decompose the problem of determining the usefulness of component $C$
into many subproblems and will deal with each subproblem
(almost) independently.
The basis of our decomposition is in slicing the component with the
{\em lowest-floor path} $F^*$ and the {\em top-most ceiling} $U^*$.
These two paths then divide the component $C$ into two parts: (1) the
primary strip and (2) the open strip.
Figure~\ref{fig:floor-of-inside-case} illustrates the paths $F^*$ and
$U^*$, and Figure~\ref{fig:inside-strip-types} shows the schematic view
of the primary and the open strips.
All types of strips and components will be defined later.
The lowest-floor is constructed in such a way that it includes
all the exits of a component $C$.
We thus assume that the sink $t$ is a vertex on the boundary of $C$,
and there is no other exit.

\begin{figure}
  \centering
  \includegraphics{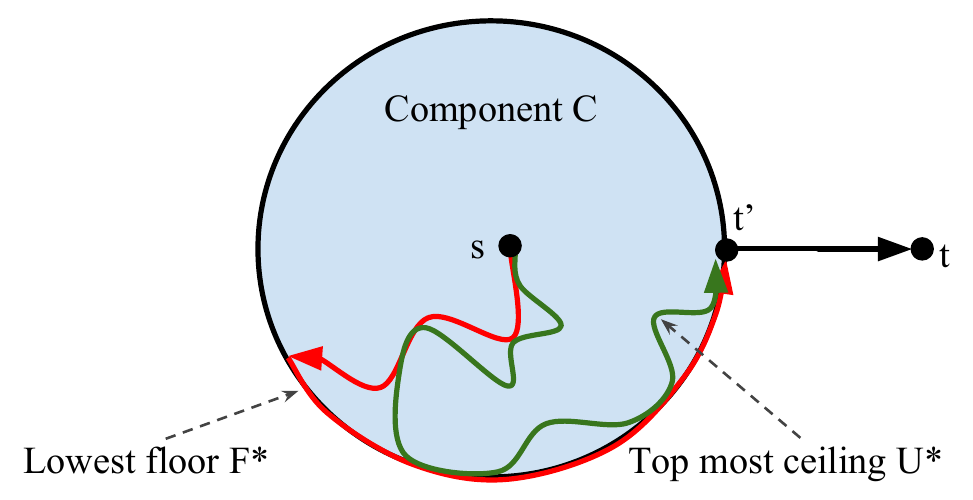}
  \caption{the lowest-floor path $F^*$ and the top-most ceiling $U^*$
           in the inside case.}
  \label{fig:floor-of-inside-case}
\end{figure}

\begin{figure}
  \centering
  \includegraphics{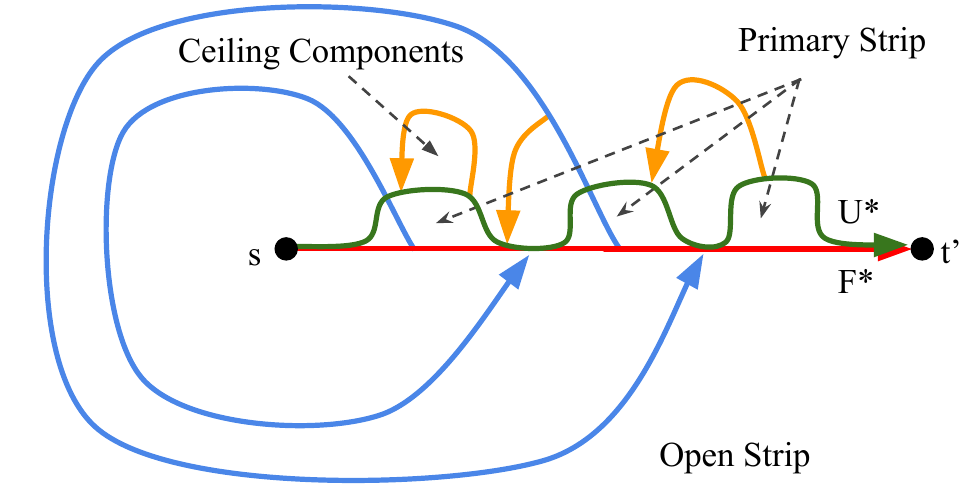}
  \caption{The types of major strips in the inside case.}
  \label{fig:inside-strip-types}
\end{figure}

\subsubsection{The lowest-floor path $F^*$.}
\label{sec:s-inside:lowest-floor}
We mainly apply the algorithm from Section~\ref{sec:st-outside} to the
primary strip defined by two paths $F^*$ and $U^*$, which may
not be arc-disjoint.  The path $F^*$ is the {\em lowest-floor path}
defined similarly to the outside case: We compute $F^*$ by finding
the right-most path $P$ from $s$ to $t$ (hence, $P$ is an $s,t$-path).
We recall that the sink $t$ is placed on the unbounded face of
the planar embedding.
Thus, $t$ must lie outside of the region enclosed by $C$.
This means that the right-most path $P$ has to go through
some exit $t'$ on the boundary of $C$.
Although we know that the right-most path goes from $t'$ to $t$ directly,
we detour the path $P$ along the boundary to include all the exits,
which is possible since the source $s$ is enclosed inside the component.
See Figure~\ref{fig:finding-lowest-floor-path} for illustration.

\begin{figure}
  \centering
  \includegraphics{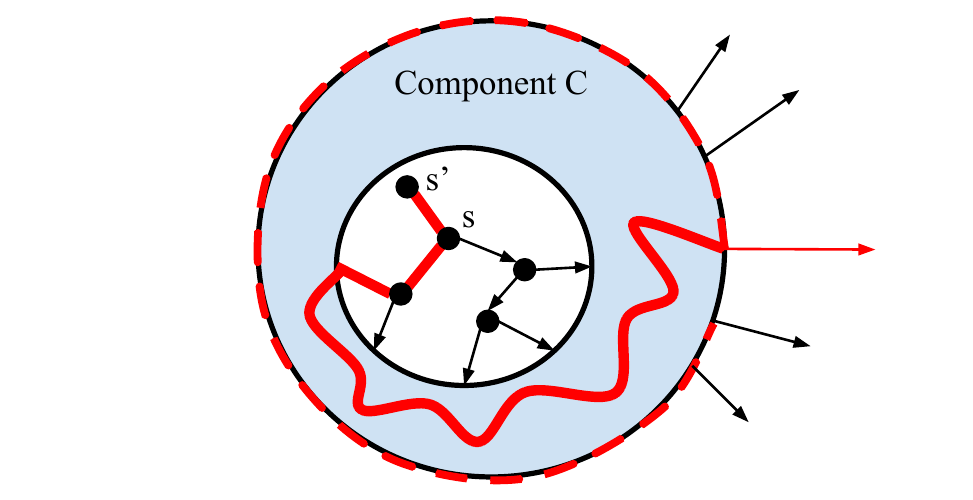}
  \caption{How to find $F^*$: the path $P$ is shown in solid red. The
    extension on the boundary is shown as a dashed line.}
  \label{fig:finding-lowest-floor-path}
\end{figure}

\subsubsection{Flipped paths.}
Unlike the outside case, the path $F^*$ in the inside case
does not divide the component $C$ into two pieces because the source
vertex $s$ is inside some inner face of $C$.
This leaves us more possibilities of paths that we have not
encountered in the outside case, called ``flipped paths''.
These are paths that goes from the left-side of the floor $F^*$
to the right-side (or vice versa).
(Please recall the terminology of left and right directions
in Section~\ref{sec:prelim:left-right}, and
recall the types of paths in Section~\ref{sec:prelim:forward-backward}.)
Fortunately, the reverse-flipped and hanging paths
do not exist because of our choice of
the lowest floor path $F^*$.

\begin{lemma}
\label{lem:no-reserse-flip-nor-hanging}
Consider the construction of the lowest floor path $F^*$
as in Section~\ref{sec:s-inside:lowest-floor}.
Then there is no reverse-flipped nor hanging paths w.r.t. $F^*$.
\end{lemma}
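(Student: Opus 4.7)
The plan is to exclude the two kinds of paths separately. First I would rule out hanging paths, using the rightmost property of $F^*$ together with the no-clockwise-cycle assumption on the embedding. Then I would rule out reverse-flipped paths by reducing them to the hanging case via an auxiliary $s$-to-starting-vertex path in the strongly connected component $C$.

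For hanging paths, I would take a putative hanging path $P$ from $u$ to $v$ (both on $F^*$, with interior strictly on the right-side of $F^*$) and split on the order of $u$ and $v$ along $F^*$. If $u$ precedes $v$, then $F^*_{s,u}\cdot P\cdot F^*_{v,t}$ is an $s,t$-path agreeing with $F^*$ outside the $u,v$ stretch and lying strictly to the right of $F^*$ between $u$ and $v$; this contradicts $F^*$ being the rightmost $s,t$-path. If $v$ precedes $u$, then $F^*_{v,u}\cdot P$ is a closed directed cycle whose enclosed region lies on the right of $F^*_{v,u}$; orienting $F^*$ so that $s$ is below $t$, walking up $F^*_{v,u}$ with the enclosed region on the right means the cycle is traversed clockwise, contradicting the no-clockwise-cycle assumption.

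For reverse-flipped paths, suppose $Q$ goes from $a$ on the right-side of $F^*$ to $b$ on the left-side. Since $Q$ must transition through $F^*$, let $c$ be the first vertex of $Q$ lying on $F^*$; then $Q_{a,c}$ lies strictly on the right-side of $F^*$ except at $c$. Because $C$ is strongly connected and $s\in V(C)$, there is an $s,a$-path $R$ in $G$; letting $x$ be the last vertex of $R$ on $F^*$, the suffix $R_{x,a}$ leaves $F^*$ to the right at $x$ and stays on the right-side until reaching $a$. The concatenation $R_{x,a}\cdot Q_{a,c}$, after shortcutting any repeated vertex, yields a simple hanging path from $x$ to $c$, contradicting the hanging-path case already handled.

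The main difficulty I anticipate is the detour portion of $F^*$ along the boundary of $C$: the ``right'' of that portion lies outside $C$, so I cannot directly appeal to the rightmost property there. I would address this by observing that by Lemma~\ref{lem:boundary-is-ccw} the boundary of $C$ is traversed counterclockwise, so every vertex of $C$ lies on or to the left of the boundary portion of $F^*$. Consequently, the right-side segments of any hanging or reverse-flipped path that remains relevant to $C$ must be witnessed against the interior (non-detour) piece of $F^*$, where the rightmost-$s,t$-path argument and the clockwise-cycle argument both apply cleanly, completing the proof.
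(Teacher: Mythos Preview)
Your treatment of hanging paths is fine and tracks the paper's argument (indeed, it is a bit more careful in splitting on whether $u$ precedes $v$). The problem is entirely in the reverse-flipped case.

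First, a definitional slip: in this paper a reverse-flipped path, like all the paths classified in Section~\ref{sec:prelim:forward-backward}, starts and ends \emph{on} $F^*$; ``starts from the right-side'' means its first arc leaves $F^*$ to the right, not that its start vertex lies strictly to the right of $F^*$. So your $a$ and $b$ are already on $F^*$, and there is no ``first vertex $c$ of $Q$ lying on $F^*$'' other than $a$ itself.

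More seriously, the step ``Since $Q$ must transition through $F^*$'' is precisely what fails in the inside case. The path $F^*$ does \emph{not} separate the plane here: the source $s$ sits in a bounded face of $C$, and a path can pass from the right side of $F^*$ to the left side by wrapping around $s$ without ever touching $F^*$. That non-separation is exactly why flipped and reverse-flipped paths are possible in the first place, and it is what the lemma has to contend with. Your reduction to the hanging case depends on this separation, so it collapses. (Relatedly, $s\notin V(C)$: after preprocessing $s$ has indegree~$0$, so it lies in no nontrivial strongly connected component. Even if you obtain an $s,a$-path $R$ by plain reachability, the suffix $R_{x,a}$ need not leave $F^*$ to the right---it can leave to the left at $x$ and wrap around to the right side---for the same non-separation reason.)

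The paper instead argues directly on a \emph{minimal} reverse-flipped path $P$ (one sharing no internal vertex with $F^*$): such a $P$ must start on the interior right-most portion $F_1$ of $F^*$; if $P$ is backward then $P$ together with a subpath of $F^*$ forms a clockwise cycle, and if $P$ is forward then it contradicts the right-most choice of $F_1$. You should replace the reduction-to-hanging step with this direct case split.
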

\begin{proof}
Consider a connected component $C$ that encloses the source $s$.
We recall that $F^*$ is constructed by finding the right-most $s,t$-path
and then detouring along the boundary of $C$.
Thus, $F^*$ is a union of two subpaths:
(1) the path $F_1$ that is a subpath of the right-most $s,t$-path
and
(2) the path $F_2$ that is a subpath of the boundary of $C$.

First, we rule out the existence of a hanging path.
Assume to the contrary that
there is a path $P\subseteq C$ that leaves $F^*$ to the right and
enters $F^*$ again from the right.
Then clearly $P$ must start and end on the first subpath $F_1$ of $F^*$
(otherwise, $P$ would have left the boundary).
But, this would contradict the fact that $F_1$ is a subpath of
the right-most $s,t$-path.

Next, we rule out the existence of a reverse-flipped path.
Assume to the contrary that
there is a path $P$ that leaves $F^*$ to the right and
enters $F^*$ again from the left.
We assume the minimality of such path
(thus, $P$ contains no internal vertices of $F^*$).
Then we know that $P$ must start from $F_1$.
(otherwise, $P$ would have left the boundary).
If $P$ is a backward path,
then the path $P$ together with a subpath of $F^*$ forms a clockwise cycle,
a contradiction
(we would have gotten ride of them in the preprocessing step).
If $P$ is a forward path, then it would contradict the
fact that $F_1$ is a subpath of the right-most path
since we can traverse along the path $P$ (which is right to $F_1$)
to get an $s,t$-path.
\end{proof}

\subsubsection{Primary Strip.}
The {\em primary strip} is defined by two paths, which may not be
arc-disjoint, namely $F^*$ and $U^*$.  The path $F^*$ is the lowest-floor
path.  The path $U^*$ is the {\em top-most ceiling}, which is an
$s,t$-path such that the strip $C_{U^*,F^*}$ encloses all
(non-flipped) forward paths (w.r.t. $F^*$). 
The path $U^*$ is essentially the left-most path from $s$ to $t$.  Section~\ref{sec:s-inside:init} describes how we find $U^*$ given $F^*$.

\subsubsection{Open Strip.}
The {\em open strip} $\hat{C}$ is defined to be everything
not in the inner parts of the primary strip.
See Figure~\ref{fig:inside-path-types} for illustration.

\begin{figure}
  \centering
  \includegraphics{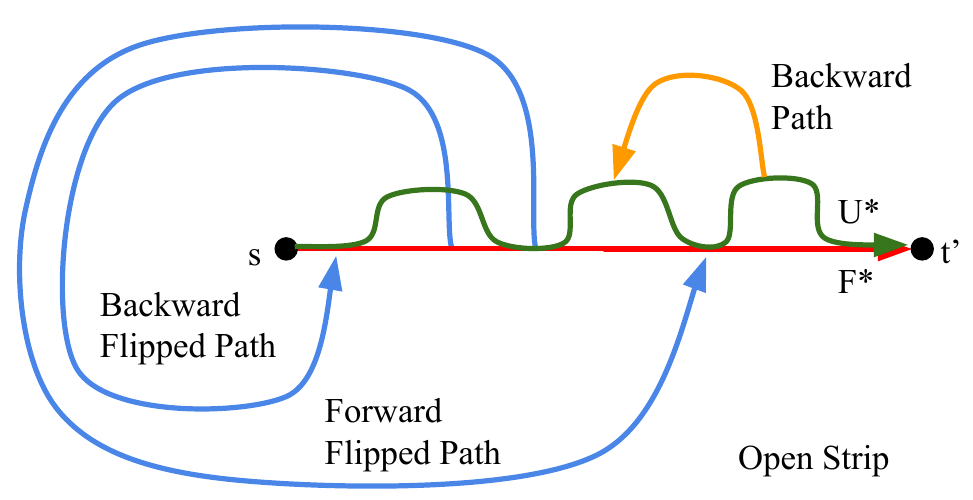}
  \caption{The types of paths in the open strip.}
  \label{fig:inside-path-types}
\end{figure}

To deal with arcs in $\hat{C}$, we first characterize the first set of
useless arcs: ceiling arcs.  An arc $e\in\hat{C}=(u,v)$ is a {\em
  ceiling arc} if $u$ can be reachable only from paths leaving $F^*$
to the left and $v$ can reach $t$ only through paths entering $F^*$
from the left as well.  These arcs form ``ceiling components''.  (Note
that paths that form ceiling components are all backward paths because
all the forward paths w.r.t. $F^*$ have been enclosed in
$C_{U^*,F^*}$.)  We shall prove that these paths are useless in
Lemma~\ref{lem:backward-comp-useless}.  Therefore, as a preprocessing
step, we delete all the ceiling arcs from $\hat{C}$.

To deal with the remaining arcs in $\hat{C}$, we first find strongly
connected components $H_1,H_2,\ldots$ in $\hat{C}$.
The arcs outside strongly connected components then form a directed
acyclic graph $\hat{D}$ that represents essentially the structures of
all flipped paths.  We can now process each component $H_i$ using the
outside-case algorithm only if we know which adjacent arcs in $\hat{D}$
are valid entrances and exits.  For arcs in $\hat{D}$ that belong to
some flipped path, we can use a dynamic programming algorithm to
compute all reachability information needed to determine their
usefulness.

Our algorithm can be described shortly as follows.
\begin{itemize}
\item {\bf Initialization.}
  We compute the lowest-floor path $F^*$ and the top-most ceiling
  $U^*$, thus forming the primary strip $C_{U^*,F^*}$ and the open
  strip $\hat{C}$.
\item {\bf The open strip (1): structures of flipped paths.}
  We find all strongly connected components in $\hat{C}$ and collapse
  them to produce the directed acyclic graph $\hat{D}$.
  We then compute the reachability information of arcs in $\hat{D}$ and
  use it to determine the usefulness of arcs in $\hat{D}$.

\item {\bf The open strip (2): process each strongly
  connected component.}
  Then we apply the outside-case algorithm to each strongly connected
  component in $\hat{C}$.
\item {\bf The primary strip.}
   We determine the usefulness of arcs in the primary strip.
\end{itemize}
In the following subsections, we describe how to deal with each part
of $C$.


\subsection{Initialization: Finding the Primary and Open Strips}
\label{sec:s-inside:init}

In this section, we discuss the initialization step of our algorithm,
which computes the primary and the open strips.
In this step, we also compute the reachability information of flipped
paths, which will be used in the later part of our algorithm.

As discussed in the previous section,
we compute the lowest-floor path $F^*$ by finding the right-most
$s,t$-path, which is unique and well-defined because
$s$ has a single arc leaving it.
The path $F^*$ is contained in $C$ because
we assume that $t$ is an exit vertex on the boundary of $C$.

The top-most ceiling $U^*$ can be computed by simply
computing the left-most non-flipped path w.r.t. of $F^*$
from $s$ to $t$.
This can be done in linear time by first
(temporary) removing every arc on the right-side of $F^*$
and then computing the left-most path from $s$ to $t$.
Consequently, we have the primary strip $C_{U^*,F^*}$,
we claim that it encloses all forward paths (w.r.t. $F^*$).

\begin{lemma}
\label{lem:main-strip-encloses-all-forward}
Every forward path w.r.t. $F^*$ is enclosed in $C_{U^*,F^*}$.
\end{lemma}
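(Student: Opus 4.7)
The plan is to derive a contradiction from the left-most-ness of $U^*$. Suppose some forward path $P$ with respect to $F^*$, with endpoints $u,v$ on $F^*$ and $u$ preceding $v$, is not enclosed in $C_{U^*,F^*}$. Since $P$ is a (minimal) forward path, it is non-flipped and meets $F^*$ only at $u$ and $v$, so the assumption that $P$ leaves the strip forces $P$ to contain some interior vertex strictly to the left of $U^*$, i.e., outside both the strip and the path $U^*$ itself.

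I would then splice $P$ with $F^*$ to form the candidate $s,t$-path
\[
R \;=\; F^*[s,u] \cdot P \cdot F^*[v,t].
\]
A routine check shows that $R$ is a simple $s,t$-path: $F^*[s,u]$ and $F^*[v,t]$ are disjoint subpaths of $F^*$ because $u$ precedes $v$; $P$ itself is simple; and the interior of $P$ avoids $F^*$ by minimality, so the three pieces share only the join vertices $u$ and $v$. All three pieces lie on $F^*$ or to the left of $F^*$, which places $R$ inside the subgraph $G'$ (obtained by deleting the arcs on the right of $F^*$) from which $U^*$ was computed as the left-most $s,t$-path.

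The contradiction now reads: every $s,t$-path in $G'$ should lie weakly to the right of $U^*$, yet $R$ passes through an interior vertex of $P$ that sits strictly to the left of $U^*$. The main obstacle will be justifying that claim cleanly. I would argue it via the left-first-search property directly: starting from $s\in U^*$, the first divergence of $R$ from $U^*$ at a shared vertex $w$ cannot take an outgoing arc strictly to the left of $U^*$'s outgoing arc at $w$, for otherwise $U^*$ would have chosen that arc; the same constraint applies at every subsequent re-meeting of $R$ with $U^*$; and planarity forbids an arc with both endpoints off $U^*$ from crossing $U^*$. Taken together, these observations trap $R$ on the right side of $U^*$ (including $U^*$ itself), contradicting the strict-left vertex produced above and completing the proof.
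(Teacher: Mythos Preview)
Your approach is correct and lands on the same core contradiction as the paper: both argue that a forward path escaping $C_{U^*,F^*}$ would violate the left-most-ness of $U^*$. The packaging differs slightly. The paper works locally: it observes that a non-enclosed $P$ must cross $U^*$, extracts a minimal subpath $Q$ of $P$ with both endpoints on $U^*$ lying entirely to its left, and argues the left-first search would have traversed $Q$ instead of the corresponding segment of $U^*$ (a second short case handles $P$ straying to the right of $F^*$, contradicting right-most-ness of $F^*$). You work globally: you splice $P$ with the prefix and suffix of $F^*$ to build a full simple $s,t$-path $R$ in $G'$ that passes strictly left of $U^*$, contradicting that $U^*$ is the left-most $s,t$-path in $G'$. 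Your route avoids the paper's case split and the need to locate where $P$ meets $U^*$, at the price of invoking the slightly stronger global statement that \emph{every} $s,t$-path in $G'$ lies weakly right of $U^*$; the paper only needs the local version (no forward left detour on $U^*$). Both proofs treat the left-first-search extremality property at the same informal level, so neither is more rigorous than the other on that point.
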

\begin{proof}
Suppose to the contrary that there exists a forward path $P$
not enclosed by $C_{U^*,F^*}$.
Then $P$ must intersect with $U^*$ or $F^*$.
If it is the former case, then $P$ must have a subpath $Q$
that leaves and enters $U^*$ on the left
(this is because $U^*$ is on the left of $F^*$).
We choose a subpath $Q$ in which the start vertex of $Q$,
say $u$, appears before its end vertex, say $v$, on $U^*$.
Such path $Q$ must exist because, otherwise, $P$ would have
a self-intersection inside $C_{U^*,F^*}$.
We may further assume that $Q$ is a minimal such path, which
means that all the arcs of $Q$ lie entirely on the left of $U^*$.
But, then $U^*$ would not be the left-most non-flipped path
(w.r.t. $F^*$) because the left-first-search algorithm
would have followed $Q$ instead of the $u,v$-subpath of $U^*$,
a contradiction.

The case that $P$ intersects $F^*$ is similar, and such path
would contradict the fact that $F^*$ is the right-most path.
Therefore, all the forward paths (w.r.t. $F^*$) must be enclosed in
$C_{U^*,F^*}$.
\end{proof}

Next consider the remaining parts of the component, which
forms the open strip $\hat{C}=C\setminus C_{U^*,F^*}$.
We have as a corollary of Lemma~\ref{lem:main-strip-encloses-all-forward}
that there is no simple $s,t$-path in the open strip that leaves $F^*$
from the right.

\begin{corollary}
\label{cor:no-right-path}
No simple $s,t'$-path that lies on the right of $F^*$.
\end{corollary}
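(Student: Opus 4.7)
The plan is to deduce this corollary directly from Lemma~\ref{lem:main-strip-encloses-all-forward} together with the construction of $F^*$ as the right-most $s,t$-path (detoured along the boundary to include every exit).

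First, I would observe that both endpoints of any alleged $s,t'$-path $P$ lie on $F^*$: the source $s$ is the start of $F^*$, and by the construction in Section~\ref{sec:s-inside:lowest-floor}, every exit of $C$ — in particular $t'$ — is a vertex of $F^*$. After the source reduction, $s$ has only one outgoing arc, which is the first arc of $F^*$, so $P$ must initially agree with $F^*$. The next step is to isolate the first vertex $u\in V(F^*)$ at which $P$ leaves $F^*$; since $P$ lies on the right of $F^*$, this departure goes to the right. Because $P$ terminates at $t'\in V(F^*)$, $P$ must re-enter $F^*$ at some vertex $v$, and the $u,v$-subpath $Q$ of $P$ forms a link of $F^*$ lying entirely on the right side.

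I would then do a case analysis on $Q$ paralleling the arguments in Lemma~\ref{lem:no-reserse-flip-nor-hanging}. If $v$ appears after $u$ on $F^*$, then $Q$ is a forward path (in the minimal sense, after possibly restricting to an inner-vertex-free segment) lying on the right of $F^*$. However, Lemma~\ref{lem:main-strip-encloses-all-forward} guarantees that every forward path w.r.t.\ $F^*$ is enclosed in $C_{U^*,F^*}$, whose interior lies on the left of $F^*$ because $U^*$ is the left-most non-flipped $s,t$-path. This contradicts $Q$ being on the right. If instead $v$ appears before $u$ on $F^*$, then $Q$ together with the $v,u$-subpath of $F^*$ forms a directed cycle; since $Q$ lies on the right of $F^*$, this cycle is oriented clockwise, contradicting the no-clockwise-cycle guarantee established in the preprocessing step. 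Either way we obtain a contradiction, so no such $P$ exists (other than a trivial subpath of $F^*$ itself, which is not strictly on the right).

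The main obstacle will be the cycle-orientation bookkeeping in the backward case — making precise, from the planar embedding, that a backward link on the right of the oriented path $F^*$ closes into a clockwise cycle rather than a counterclockwise one. The rest of the argument is a straightforward invocation of the two structural results above, which is why the corollary follows so quickly from Lemma~\ref{lem:main-strip-encloses-all-forward}.
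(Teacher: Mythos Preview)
Your proposal is correct and follows essentially the same line as the paper: isolate a link of $P$ that departs $F^*$ to the right, and derive a contradiction from either Lemma~\ref{lem:main-strip-encloses-all-forward} (forward case) or the absence of clockwise cycles (backward case). The paper's own proof organizes the cases slightly differently---splitting on whether $P$ re-enters $F^*$ from the right, from the left, or not at all, and in the first case arguing that the lack of any forward subpath forces a self-intersection---whereas you dispatch the backward link directly via the clockwise-cycle contradiction (the same device used in Lemma~\ref{lem:no-reserse-flip-nor-hanging}). Your version is arguably tidier for the corollary as stated, since assuming $P$ lies entirely on the right of $F^*$ lets you skip the paper's left-entry and no-entry cases.
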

\begin{proof}
Suppose there is an $s,t$-path $P$ in the open strip that leaves
$F^*$ from the right.
Then we have three cases.
First, if $P$ also enters $F^*$ from the right,
then $P$ cannot be a forward-path and cannot have any
forward subpath by Lemma~\ref{lem:main-strip-encloses-all-forward}.
Thus, $P$ must have a self-intersection, contradicting to the
fact that $P$ is a simple path.
Second, if $P$ enters $F^*$ from the left,
then we can find a $u,v$-subpath $Q$ of $P$ such that $u$ appears
before $v$ in $F^*$ and $Q$ is arc-disjoint from $F^*$.
But, then the right-first-search algorithm would have followed $Q$
instead of  the $u,v$-subpath of $F^*$, a contradiction.
Otherwise, if $P$ never enters $F^*$, then it must go directly to $t$.
But, then the right-first-search algorithm would have followed
the subpath of $P$ that leaves $F^*$, a contradiction.
\end{proof}


\subsection{Working on the Open Strip (1): Dealing with Arcs in Flipped Paths}
\label{sec:arcs-in-flipped-paths}

To determine the usefulness of arcs on the DAG $D$, 
which is formed by contracting strongly connected components 
$H_1,H_2,\ldots$ in the open strip, 
it suffices to check if it is contained in some useful flipped path
w.r.t. $F^*$.

We first prove the characterization of useful flipped paths.
Section~\ref{sec:s-inside:reachability} describes how we can 
check if an arc belongs to any useful path
by using reachability information.
Our first observation is that every forward flipped path is useful.

\begin{lemma}
\label{lem:flip-forward-is-useful}
Every forward flipped path w.r.t. $F^*$ is useful.
\end{lemma}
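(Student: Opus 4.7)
The plan is to build, for any forward flipped path $P$ with start vertex $u$ on the left of $F^*$ and end vertex $v$ on the right, a simple $s,t$-path of the form $Q_1 \cdot P \cdot Q_2$, where $Q_1$ is an $s,u$-path and $Q_2$ is a $v,t$-path, the three pieces being pairwise internally vertex-disjoint. Note that because $P$ is a directed path from $u$ to $v$, we must enter $P$ precisely at $u$ and leave precisely at $v$; this is what forces the $Q_1 \cdot P \cdot Q_2$ shape.

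The first step is to mark the contact points of $P$ with $F^*$. Let $a$ be the first vertex of $P$ that lies on $F^*$ (traversing $P$ from $u$), and $b$ the last. Because $P$ is forward w.r.t. $F^*$, the vertex $a$ appears no later than $b$ along $F^*$; moreover the prefix $P_{u,a}$ lies strictly on the left of $F^*$ (except at $a$) and the suffix $P_{b,v}$ lies strictly on the right (except at $b$). This gives a clean partition of $P$ into a left piece, a ``middle'' piece that behaves like an ordinary forward path, and a right piece.

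The second step is to build $Q_1$. Since $C$ is strongly connected and $s \in V(C)$, the vertex $u$ is reachable from $s$ in $C$. I would argue, using planarity together with Lemma~\ref{lem:no-reserse-flip-nor-hanging}, that among all $s,u$-paths one can be chosen that stays on the left side of $F^*$ and avoids the internal vertices of $P$: concretely, follow $F^*$ from $s$ and branch off to the left at a suitable vertex located no later than $a$ on $F^*$; any branching attempt that would be forced to meet the interior of $P$ would, by a planar case analysis, instantiate either a reverse-flipped path (if it re-enters $F^*$ from the right) or a hanging path (if it re-leaves $F^*$ to the left before reaching $u$), both of which are forbidden by Lemma~\ref{lem:no-reserse-flip-nor-hanging}. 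A symmetric construction produces $Q_2$ from the suffix $F^*_{b,t}$, staying on the right side of $F^*$ until reaching $t$.

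The third step is to verify that $R = Q_1 \cdot P \cdot Q_2$ is a simple $s,t$-path. Disjointness between $Q_1$ (on the left of $F^*$) and $Q_2$ (on the right of $F^*$) follows from planarity: the only vertices they could share lie on $F^*$, but by construction the $F^*$-portion of $Q_1$ ends no later than $a$ and the $F^*$-portion of $Q_2$ begins no earlier than $b$, and the forward ordering $a \preceq b$ keeps these portions separated. Internal disjointness from $P$ was ensured during the construction of $Q_1$ and $Q_2$.

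The main obstacle I expect is the second step: rigorously producing $Q_1$ that is internally disjoint from $P$. The key idea is that any obstruction---an unavoidable internal meeting of every candidate $s,u$-path with $P$---must, via planarity, force the existence of a hanging or reverse-flipped subpath w.r.t. $F^*$, contradicting Lemma~\ref{lem:no-reserse-flip-nor-hanging}. Once that is settled, the remaining disjointness and simplicity verifications are straightforward planar bookkeeping.
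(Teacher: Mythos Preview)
You have misread what a \emph{forward flipped path} is. In the paper (see the proof itself, and the convention in Section~\ref{sec:prelim:forward-backward} that forward paths are by default minimal), a forward flipped path $P$ \emph{starts at a vertex $v\in F^*$ and ends at a vertex $w\in F^*$} with $v$ appearing before $w$ on $F^*$, and meets $F^*$ only at $v$ and $w$. The ``flipped'' qualifier refers to the fact that $P$ leaves $F^*$ to the left at $v$ and re-enters $F^*$ from the right at $w$; the endpoints themselves are on $F^*$, not off it. Your setup (``start vertex $u$ on the left of $F^*$ and end vertex $v$ on the right'', with $a$ the first floor vertex and $b$ the last) takes the endpoints to be strictly off $F^*$, which is a different object.

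Once the correct definition is in hand the lemma is almost immediate, and the paper's argument is just the replace-a-subpath trick of Lemma~\ref{lem:forward-path-is-useful}: extend $F^*$ to a simple $s,t$-path $R$, and replace the $v,w$-subpath of $R$ by $P$. Since $P$ is minimal, $V(P)\cap V(R)=\{v,w\}$, so the result is simple. There is no need to build $Q_1$ and $Q_2$ off the floor, no need to invoke Lemma~\ref{lem:no-reserse-flip-nor-hanging}, and no planar case analysis. Your second step in particular---arguing that every candidate $s,u$-path that meets the interior of $P$ must produce a forbidden hanging or reverse-flipped subpath---is both unnecessary and, as you yourself flag, not actually carried out; even under your alternative reading of ``flipped'' that step would require real work, because a left-side $s,u$-path can touch $F^*$ multiple times and weave around $P$ without obviously creating either forbidden configuration.
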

\begin{proof}
Consider any forward flipped path $P$.
By the definition of the forward flipped path,
$P$ shares no arcs with $F^*$.
Moreover, $P$ starts at some vertex $v\in F^*$ and ends at some vertex
$w\in F^*$ such that $v$ appears before $w$ in $F^*$.
Note that $F^*$ is a path that goes from $s$ to the vertex $t'$ in the
boundary of the component $C$.
We extend $F^*$ to a simple $s,t$-path $R$ by adding a $t',t$-path
that is not in the component $C$.
We then replace the $u,v$-subpath of $R$ by $P$, thus getting
a new path $R'$, which is a simple path because
$R$ shares no arcs with $P$ and contains no vertices of
$V(P)\setminus\{v,w\}$.
Thus, $R'$ is a simple $s,t$-path, implying that $P$ is useful.
\end{proof}

If an arc is not contained in any forward flipped path, it must be in
some backward flipped path.  In most cases, a backward flipped path is
useless except when there exists an exit that provides an alternative
route to $t$.

Consider the primary strip, which is enclosed by
the lowest-floor $F^*$ and the top-most ceiling $U^*$.
There are some arcs that $U^*$ and $F^*$ have in common.
If we remove all of these arcs, then the remaining graph
consists of weakly connected components, which we call humps.
Formally, a {\em hump} is a strip
obtained from the primary strip by removing $F^*\cap U^*$.
We can also order humps by the ordering of
their first vertices on the floor $F^*$.
The structure of humps is important in determining the usefulness of
a flipped path.
See Figure~\ref{fig:useful-flipped-backward-path}.

\begin{figure}
  \centering
  \includegraphics{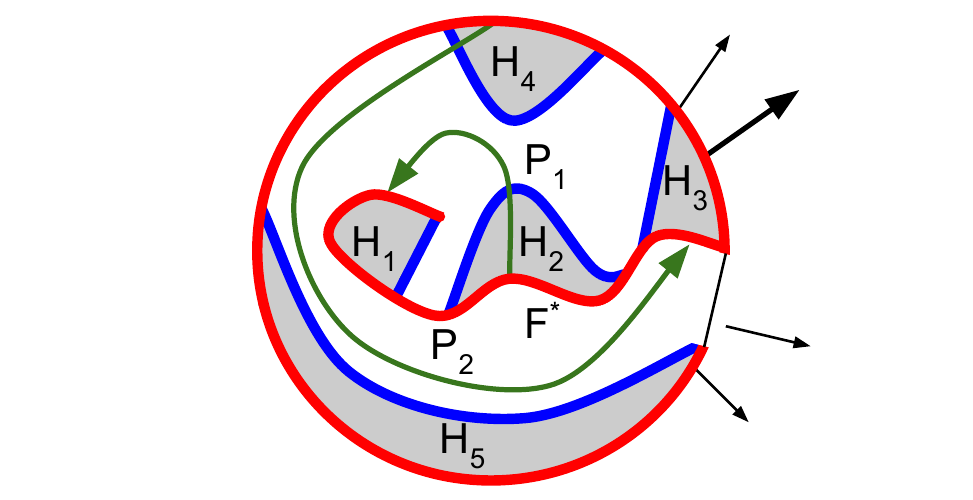}
  \caption{An example of flipped backward paths $P_1$ (useless) and
    $P_2$ (useful).  $F^*$ is shown in red and $U^*$ is shown in blue.
    There are 5 humps (shown in gray).}
  \label{fig:useful-flipped-backward-path}
\end{figure}

The lemma below shows the conditions when arcs in a backward flipped
path can be useful.

\begin{lemma}
\label{lem:useful-backward-flip-paths}
Let $P$ be any backward flipped-path such that $P$ starts at a vertex
$u\in F^*$ and ends at a vertex $v\in F^*$.
Then $\hat{P}=P\cap\hat{C}$ (the subpath of $P$ in the open strip)
is useful iff
$P$ ends at some hump $H$ (thus, $H$ contains $v$) such that either
\begin{enumerate}
\item both $u$ and $v$ are in the same hump $H$, and
      $v$ is not the first vertex in $H$.
\item The vertex $u$ is not in the hump $H$, and
      there is an exit after $v$ in the hump $H$.
\end{enumerate}
\end{lemma}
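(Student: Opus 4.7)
The plan is to prove the iff in two halves: sufficiency (each of the two conditions yields a certifying simple $s,t$-path containing $\hat{P}$) and necessity (any simple $s,t$-path containing $\hat{P}$ forces one of the two conditions). Throughout I exploit that $P$ is a backward flipped path, so $v$ strictly precedes $u$ on $F^*$, the first arc of $P$ enters the primary strip, and the last arc of $P$ enters $v$ from the open-strip side. Thus $\hat{P}$ runs from the first vertex $u'$ where $P$ leaves the primary strip (a vertex on $U^* \cup F^*$) to $v$ itself, entirely inside $\hat{C}$.

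For sufficiency under Condition~1, let $a$ be the first vertex of hump $H$, i.e., the $F^* \cap U^*$ vertex at the start of $H$; since $v$ is not the first vertex of $H$ on $F^*$, $a$ appears strictly before $v$ on $F^*$. I would construct a candidate simple $s,t$-path $R$ by concatenating (i)~the prefix of $F^*$ from $s$ to $a$, (ii)~an interior path of hump $H$ from $a$ to $u$ that lies in the primary strip and realizes the primary-strip portion connecting $u$ to $u'$, (iii)~the subpath $\hat{P}$ from $u'$ to $v$ (in the open strip), and (iv)~the suffix of $F^*$ from $v$ to $t$. Disjointness reduces to: $a \neq v$ so (i) and (iv) are disjoint prefix/suffix of $F^*$; (ii) lies in the primary strip while (iii) lies in the open strip, so they meet only on the shared boundary of the hump. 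For Condition~2, I build $R$ as (i)~the prefix of $F^*$ from $s$ to $u$ (valid because $u$ precedes the whole of $H$ on $F^*$), (ii)~the subpath of $P$ from $u$ to $u'$ in the primary strip followed by $\hat{P}$ from $u'$ to $v$, (iii)~an interior path inside $H$ from $v$ to the exit $w$, and (iv)~a $w,t$-path that leaves $C$ and stays outside the component. Since $u$ lies strictly before hump $H$ on $F^*$, (i) is disjoint from (iii), and the strip separation again handles the rest.

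For necessity, suppose $R$ is a simple $s,t$-path containing $\hat{P}$, and let $H$ be the hump containing $v$. I examine how $R$ reaches $t$ from $v$. If $R$ follows the suffix of $F^*$ from $v$ onward, then the $s$-to-start-of-$\hat{P}$ portion cannot re-use any vertex of that suffix, so $R$ must reach the primary-strip side of $u$ without crossing $v$ along $F^*$; by planarity together with the no-clockwise-cycle hypothesis and Lemma~\ref{lem:no-reserse-flip-nor-hanging} (no reverse-flipped or hanging paths w.r.t.\ $F^*$), the only such routing is through a hump that brackets both $u$ and $v$, which forces $u \in H$ and $v$ to not be the first vertex of $H$, i.e., Condition~1. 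Otherwise $R$ must leave $C$ through some exit on $F^*$ after $v$; since $F^*$ detours along the boundary to include all exits, this exit is a vertex $w$ on $F^*$ strictly after $v$, and by planarity $w$ must lie inside $H$ (any later hump would force $R$ to recross its own $\hat{P}$ portion or create a clockwise cycle), while $u$ must lie before $H$, giving Condition~2.

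The main obstacle is the necessity direction, where I must systematically rule out every alternative routing of $R$. Using planarity plus the right-most property of $F^*$, the left-most non-flipped property of $U^*$, and Lemma~\ref{lem:no-reserse-flip-nor-hanging}, any candidate routing that violates both conditions (for example, using an exit in a later hump, or placing $u$ and $v$ in the same hump with $v$ the first vertex) can be shown to produce either a self-intersection in $R$ or a clockwise cycle, both of which are forbidden. Carrying out this case analysis cleanly, rather than the sufficiency construction, is where the delicate planar reasoning is concentrated.
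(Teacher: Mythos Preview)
Your sufficiency constructions contain a genuine error stemming from the position of $u$ relative to the hump $H$. You correctly state that $v$ strictly precedes $u$ on $F^*$; but since $v\in H$, this forces $u$ to lie \emph{after} (or in) $H$ on $F^*$, never before it. Hence in your Condition~2 construction the claim ``$u$ precedes the whole of $H$ on $F^*$'' is false, and the prefix of $F^*$ from $s$ to $u$ necessarily passes through $v$, so the path you build is not simple. Your Condition~1 construction has a parallel problem: you route through $u$ via an ``interior path of $H$ from $a$ to $u$'' and then later take the $F^*$-suffix from $v$ to $t$; since $v<u$ on $F^*$, that suffix also contains $u$, so $u$ is visited twice. (It is also unclear that an interior path from $a$ to $u$ avoiding $v$ exists at all: $u$ lies on the floor of $H$, with $v$ between $a$ and $u$.) The same positional confusion reappears in your necessity sketch for Condition~2, where you conclude ``$u$ must lie before $H$''.

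The paper avoids $u$ altogether in the certifying path. Letting $w\in P\cap U^*$ be the point where $P$ crosses the ceiling, the witness is the prefix of $U^*$ from $s$ to $w$, then $\hat{P}$ from $w$ to $v$, then a portion of $F^*$ starting at $v$ (to $t$ in Case~1, to the in-hump exit in Case~2). Because $U^*$ coincides with $F^*$ only outside humps and $v$ is not the first vertex of $H$ (Case~1) or the path leaves at an exit before the end of $H$ (Case~2), this concatenation is simple without ever revisiting $u$. For necessity the paper argues by contraposition, and the key device in the ``$u\notin H$, no exit after $v$'' subcase is the \emph{last} floor vertex $x$ of $H$: any $s$-to-$w$ path avoiding $\hat{P}$ must use $x$, and any $v$-to-$t$ path must also use $x$, so no simple $s,t$-path through $\hat{P}$ exists. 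Your plan would be repaired by adopting the $U^*$-prefix construction and by reorienting the necessity case analysis around $x$ rather than around an incorrect placement of $u$.
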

\begin{proof}
  First, we show that if the end vertex $v$ of $P$
  is not in any hump, then $\hat{P}=P\cap\hat{C}$ is useless.
  We assume wlog that the start vertex $u$ of $P$ is also
  not contained in any hump;
  otherwise, we contract all vertices in the hump
  (that contains $u$) to the vertex $u$.
  Thus, $P=\hat{P}$.
  It suffices to show that there is no $s,u$-path $J$
  that is vertex disjoint from $P$, which will imply that
  $P$ cannot be a useful path.
  To see this, assume a contradiction that there is an $s,u$-path $J$
  that is vertex disjoint from $P$.
  Then $J$ has to leave $F^*$ at some vertex $x$
  and then enters $F^*$ again at some vertex $y$
  (it is possible that $y=u$) in such a way that
  $x,y,u,v$ appear in the order $(x,v,y,u)$ on $F^*$.
  Let $J'$ be the $x,y$-subpath of $J$.
  If $J'$ is contained in the primary strip,
  then $J'$ would induce a hump containing $v$, a contradiction.
  So, we assume that $J'$ is not contained in the primary strip.
  If $J'$ leaves $F^*$ from the right,
  then $J'$ cannot enter $F^*$;
  otherwise, it would contradict the fact that $F^*$ is
  the right-most path.
  Hence, we are left with the case that $J'$ leaves $F^*$
  from the left and then enters again from the right.
  Since $J$ and $P$ are vertex-disjoint (and so are $J'$ and $P$),
  $J'$ cannot enter $F^*$ at any vertex that appears after $v$
  (including $u$ and $y$) because, otherwise, $J'$ has to cross the
  path $P$.
  Thus, we again have a contradiction.
  Consequently, since there is no path from $s$ that can reach
  $u$ or the hump containing $u$ without intersecting with $\hat{P}$,
  the path $\hat{P}$ cannot be a useful path.

  Next consider the case that $v$ is contained in some hump $H$.
  Let $w$ be a vertex in $P\cap U^*$, i.e., $w$ be a vertex in the
  ceiling intersecting with $P$.
  Also, let $\hat{P}$ be $P\cap\hat{C}$.

  For the backward direction, we construct
  a useful path $Q$ containing $\hat{P}$ by taking the union of the
  prefix of the top-most ceiling $U^*$ up to $w$, $\hat{P}$ and
  the subpath $P_F$ of the floor $F^*$ after $v$.
  In Case (1), $P_F$ is the suffix of $F^*$ after $v$, and
  in Case (2), $P_F$ starts from $v$ and ends at the exit.
  It can be seen that $Q$ is a simple path from $s$ to an exit,
  meaning that $Q$ is a useful path and so is $\hat{P}$.

  Now consider the forward direction.
  We proceed by contraposition.
  There are two subcases we have to consider:
  (i) $u$ and $v$ are in the same hump $H$, but
  $v$ is the first vertex in the hump and
  (ii) $u$ is not in the hump $H$ and there is no exit after
  $v$ in $H$.

  In Subcase (i), any useful path $Q$ containing $\hat{P}$ whose
  prefix up until $\hat{P}$ is entirely in the primary strip has to go
  through $v$; therefore, it cannot be simple.
  Consider the case when there is a useful path $Q$ containing
  $\hat{P}$ outside the primary strip.
  The prefix $Q'$ of this path must reach $w$ without touching $v$
  (because $v\in \hat{P}$).
  Observe that the union of $H$ and $\hat{P}$ contains a cycle
  enclosing the source $s$,
  and by the construction of $U^*$, there is no path from $s$ entering
  $H\cap U^*$ (otherwise, it would have been included in $U^*$).
  Consequently, because of planarity, the prefix $Q'$ must cross
  $\hat{P}$, meaning that $Q$ cannot be simple.

  Finally, we deal with Subcase (ii).
  Let $x$ be the last floor vertex of $H$.
  First note that, by the choice of $F^*$, any path from $s$ to $w$
  not intersecting $\hat{P}$ cannot cross $F^*$ to the right;
  therefore, it has to use $x$.
  Because of the same reason, any path from $v\in F^*$ to $t$ cannot
  leave $F^*$ to the right; thus, it must also go though $x$ as well.
  Consequently, a path from $s$ to $t$ containing $\hat{P}$ in this case
  cannot be simple.
\end{proof}

\subsubsection{Reachability information}
\label{sec:s-inside:reachability}

After we compute the primary strip and the open strip, and collapse
strongly connected components in $\hat{C}$, we run a linear-time
preprocessing to compute reachability information.

For each vertex $u$ in $\hat{D}$, we would like to compute $\first(u)\in
F^*$ defined to be the first vertex $w$ in $F^*$ such that there is a
path from $s$ to $u$ that uses $w$ but not other vertex in $F^*$ after
$w$.  We also want to compute $\last(u)\in F^*$ defined to be the last
vertex $w'$ in $F^*$ such that there is a path from $u$ to $t$ that
uses $w'$ as the first vertex in $F^*$.

To compute $\first(u)$, 
we find the left-first-search tree $T$ rooted at $s$.
For $u\in\hat{C}$, we set $\first(u)$ to be the closest ancestor of $u$ 
on the floor $F^*$.  
We compute $\last(u)$ similarly by 
finding the left-first-search tree in the reverse graph rooted at $t$
and setting $\last(u)$ as the closest ancestor of $u$ 
on the floor $F^*$.


\subsubsection{Checking arcs in $\hat{D}$}

Now we describe our linear-time algorithm for determining
the usefulness of arcs in $\hat{D}$

Consider an arc $e=(u,v)\in\hat{D}$.  If $\phi(\first(u))>\phi(\last(v))$,
then there exist a flipped forward path containing $e$.
Thus, by Lemma~\ref{lem:flip-forward-is-useful}, $e$ is useful.

If $\phi(\first(u))\leq\phi(\last(v))$, then we need to check conditions in
Lemma~\ref{lem:useful-backward-flip-paths}.  To do so, we need to
maintain additional information on every vertex $v$ in $F^*$, namely
the hump $H(v)$ that contains it, a flag representing if $v$ is the
first vertex in the hump, and a flag representing if the hump $H(v)$
contains an exit.  Since these data can be preprocessed in linear-time,
we can perform the check for each arc $e$ in constant time.

\subsection{Working on the Open Strip (2):
Dealing with Arcs in Strongly Connected Components}
\label{sec:arcs-in-scc}

Consider a strongly connected component $H_i$ in the open strip.  If
no arcs going into $H_i$ are useful or no arcs leaving $C$ are useful,
clearly every arc in $H_i$ are useless.  However, it is not obvious
that applying the previous outside algorithm simply works because
entering and leaving arcs determine the usefulness of the path.  See
Figure~\ref{fig:scc-in-open-strip-dependency}, for example.

\begin{figure}
  \centering
  \includegraphics{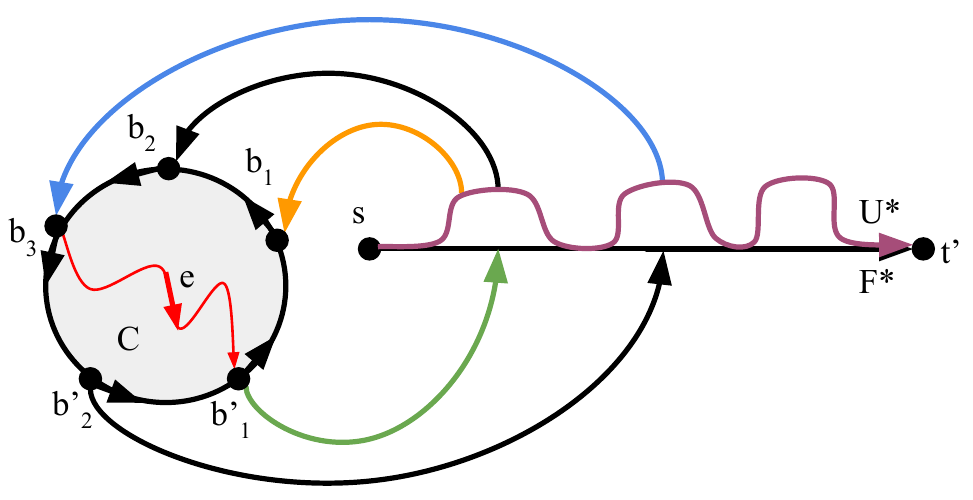}
  \caption{The arc $e$ in the red path is useful when considering $C$
    as an outside case instance.  However, using the blue path as
    an entering path (at the entrance $b_3$) and the green path as
    a leaving path (at the exit $b'_2$) is not enough to show that.
    One can, instead, use the orange path (at the entrance $b_1$)
    as an entering path together with parts of the boundary to
    construct a forward flipped path that contains $e$.}
  \label{fig:scc-in-open-strip-dependency}
\end{figure}

The following lemma proves that we can take the strongly connected
component $H_i$ and build an outside case instance by attaching, for
each useful arc in the DAG entering $H_i$ or leaving $H_i$,
corresponding to an entrance or an exit.

\begin{lemma}
Every arc in an outside instance of $H_i$ is useful iff it is
useful in the original graph.
\end{lemma}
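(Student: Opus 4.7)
The plan is to construct an explicit outside-case instance for $H_i$ and then verify both directions of the equivalence. The outside instance is built by introducing a fresh source $s_C$ and a fresh sink $t_C$ in the outer face, and attaching, for every useful DAG arc $(u,v)$ with $v \in V(H_i)$, a new arc $s_C \to v$, and symmetrically $v' \to t_C$ for every useful DAG arc $(v',u')$ leaving $H_i$ at $v'$. These attachments are inserted respecting the planar embedding so that $s_C$ and $t_C$ lie outside the strongly connected component $H_i$, yielding a valid instance for the outside-case algorithm of Section~\ref{sec:st-outside}.

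For the backward direction, suppose an arc $e \in E(H_i)$ is useful in the original graph via a simple $s,t$-path $P$. Let $P'$ be the maximal subpath of $P$ contained in $H_i$ that contains $e$; this subpath enters $H_i$ at some vertex $v_1$ along a DAG arc $(u_1,v_1)$ and leaves at some vertex $v_2$ along a DAG arc $(v_2,u_2)$. Both DAG arcs are useful in $G$ since $P$ itself certifies this, so both are attached in the construction. The concatenation $(s_C,v_1)\cdot P'\cdot(v_2,t_C)$ is then a simple $s_C,t_C$-path through $e$, showing $e$ is useful in the outside instance.

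For the forward direction, take a witnessing simple $s_C,t_C$-path $\tilde{P}$ for $e$ in the outside instance. Then $\tilde{P}$ uses exactly one entrance arc corresponding to a useful DAG arc $(u_1,v_1)$, traverses a simple $v_1$-to-$v_2$ path $P'$ inside $H_i$ through $e$, and leaves via one exit arc corresponding to a useful DAG arc $(v_2,u_2)$. I would then use Lemma~\ref{lem:flip-forward-is-useful} and Lemma~\ref{lem:useful-backward-flip-paths} to pick concrete simple $s,t$-path witnesses $R_1$ for $(u_1,v_1)$ and $R_2$ for $(v_2,u_2)$ in $G$, and splice the $s$-to-$v_1$ prefix of $R_1$, the segment $P'$, and the $v_2$-to-$t$ suffix of $R_2$ into a single $s,t$-path through $e$.

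The hard part is ensuring that this spliced path is actually simple in $G$: the independently chosen prefix and suffix may share vertices with one another, or re-enter $V(H_i)$. I would first argue that, by simplicity of $R_1$ and the strong connectivity of $H_i$, the $s$-to-$v_1$ prefix of $R_1$ can be taken to meet $V(H_i)$ only at $v_1$ (otherwise one can shortcut through $H_i$ while preserving usefulness of $(u_1,v_1)$), and symmetrically for the $v_2$-to-$t$ suffix of $R_2$. To avoid mutual intersections between the prefix and suffix in the open strip $\hat{C}$, I would exploit the planar classification of flipped paths: the explicit routings in Lemma~\ref{lem:flip-forward-is-useful} (along the top-most ceiling $U^*$) and Lemma~\ref{lem:useful-backward-flip-paths} (through humps together with suffixes of $F^*$ to an exit) together with a planarity exchange argument around the counterclockwise boundary of $H_i$ (Lemma~\ref{lem:boundary-is-ccw}) allow us to reroute any offending intersection and obtain the desired simple $s,t$-path.
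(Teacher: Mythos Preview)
Your backward direction matches the paper. The forward direction, however, misses the paper's key move and leaves a real gap.

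The paper does not try to splice together the prefix of a witness for the entering DAG arc and the suffix of a witness for the leaving DAG arc at the \emph{given} entrance/exit pair $(v_1,v_2)$. Instead, it first reroutes \emph{inside} $H_i$: since the entrances $b_1,\ldots,b_{n_1}$ and exits $b'_1,\ldots,b'_{n_2}$ occur consecutively (not interleaved) along the counterclockwise boundary cycle of $H_i$, one can prepend boundary arcs from $b_1$ to $v_1$ and append boundary arcs from $v_2$ to $b'_{n_2}$ to the path $P'$, still obtaining a simple path inside $H_i$ through $e$. Now the splice only has to be done at the canonical extremal pair $(b_1,b'_{n_2})$, and the external $s$--$b_1$ and $b'_{n_2}$--$t$ paths can be chosen once and for all, independent of $e$. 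This is exactly what the figure caption about the orange path at $b_1$ is illustrating.

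Your plan instead keeps the arbitrary pair $(v_1,v_2)$ and hopes to resolve collisions between the externally chosen prefix of $R_1$ and suffix of $R_2$ by an unspecified ``planarity exchange argument.'' That is the gap: for a generic entrance/exit pair there is no reason the two external witnesses can be made disjoint in $\hat{C}$, and Lemmas~\ref{lem:flip-forward-is-useful} and~\ref{lem:useful-backward-flip-paths} do not by themselves provide enough control to guarantee it. The fix is not to repair the external paths but to change the internal endpoints via the boundary reroute to $(b_1,b'_{n_2})$; once you do that, the difficulty you flagged disappears.
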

\begin{proof}
  The backward direction is straight-forward.  Consider the forward
  direction: we would like to show that if an arc $e$ is useful in the
  instance, there exists a useful path containing it.

  Suppose that $H_i$ has $n_1$ entrances and $n_2$ exits.
  From Lemma~\ref{lem:boundary-is-ccw} and the observation
  in the proof of Lemma~\ref{lem:one-useful} that
  entrance-exit pairs do not appear interleaving,
  we know that $H_i$ is enclosed by a counter-clockwise
  cycle and we can name all entrances as $b_1,b_2,\ldots,b_{n_1}$ and
  all exits as $b'_1,b'_2,\ldots,b'_{n_2}$ in such a way that they
  appear in counter-clockwise order as
  \[
  b_1,b_2,\ldots,b_{n_1},b'_1,b'_2,\ldots,b'_{n_2}.
  \]

  Consider an arc $e$ which is useful in this instance.  Since $e$ is
  useful, there is a simple path $P$ from some entrance $b_i$
  to some exit $b'_j$ containing $e$.  Since the boundary arcs form a
  counter-clockwise cycle, we can extend $P$ to start from $b_1$ by
  adding boundary arcs from $b_1$ to $b_i$.  We can also extend $P$ to
  reach $b'_{n_2}$ by adding arcs from $b'_j$ to $b'_{n_2}$.

  Since an arc entering $C$ at $b_1$ is useful and an arc leaving $C$
  from $b'_{n_2}$ is also useful, we can construct a useful path
  containing $e$ by joining a path from $s$ to $b_1$, the path $P$,
  and a path from $b'_{n_2}$ to $t$. Thus, $e$ is useful.
\end{proof}


\subsection{Working on the Primary Strip}
\label{sec:s-inside:primary}

Given the primary strip $C_{U^*,F^*}$, we can use the algorithm from
Section~\ref{sec:st-outside} to find all useful arcs inside
$C_{U^*,F^*}$.  The next lemma proves the correctness of this step.

\begin{lemma}
  A useful arc $e$ in $C$ is useful iff it is useful in $C_{U^*,F^*}$.
\end{lemma}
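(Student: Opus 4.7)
The plan is to establish both directions of the biconditional. I interpret the statement as: an arc $e$ lying in $C_{U^*,F^*}$ is useful in the original graph $G$ if and only if it is useful when the primary strip $C_{U^*,F^*}$ is handed to the outside-case algorithm of Section~\ref{sec:st-outside} (with floor $F^*$ and ceiling $U^*$). The usefulness of $F^*$ and $U^*$ themselves is already guaranteed, so the outside-case machinery applies once correctness is shown.

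The backward direction is the easy one. Suppose $e$ is useful in the outside-case instance on $C_{U^*,F^*}$. Then there is a simple $s,t'$-path $P$ lying in $C_{U^*,F^*}$ that contains $e$, where $t'$ is the exit at which $F^*$ and $U^*$ meet. Because $t$ is placed on the unbounded face and $t'$ is an exit of $C$, there is a $t',t$-path $R$ that lies entirely outside $C$, and hence is vertex-disjoint from $V(P)\setminus\{t'\}$. The concatenation $P\cdot R$ is therefore a simple $s,t$-path of $G$ containing $e$, so $e$ is useful in $G$.

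For the forward direction, suppose $e=(u,v)\in E(C_{U^*,F^*})$ is contained in a simple $s,t$-path $P$ of $G$. I would ``straighten'' $P$ into a simple $s,t'$-path $P'\subseteq C_{U^*,F^*}$ that still contains $e$, after which appending the external $t',t$-path from the backward direction certifies usefulness in $C_{U^*,F^*}$. Consider any maximal subpath $Q$ of $P$ that lies in the open strip $\hat{C}=C\setminus C_{U^*,F^*}$; then $Q$ has both endpoints on $U^*\cup F^*$. By Lemma~\ref{lem:no-reserse-flip-nor-hanging} there are no hanging or reverse-flipped paths w.r.t.\ $F^*$; by Lemma~\ref{lem:main-strip-encloses-all-forward} every forward path w.r.t.\ $F^*$ lies in $C_{U^*,F^*}$; and because $U^*$ is the left-most non-flipped $s,t'$-path, any detour to the left of $U^*$ (if interior to $C$) would have been chosen by the left-first search, a contradiction. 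This leaves only backward detours along the boundary and forward-flipped detours (which must encircle $s$). I would replace each such $Q$ with the corresponding subpath of $U^*$ or $F^*$ that lies inside the primary strip.

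The main obstacle is ensuring the rerouted $P'$ remains simple, particularly in the presence of forward-flipped detours that loop around $s$. To handle this, I would split $P$ at $e$ into an $s,u$-prefix and a $v,t$-suffix and reroute each side independently: the prefix inherits the left-most structure and can be replaced using $U^*$ up to the entry vertex of $e$, and the suffix inherits the right-most structure and can be replaced using $F^*$ from the exit vertex of $e$ onward. The extremal choices of $U^*$ (left-most non-flipped) and $F^*$ (right-most) guarantee that the rerouted prefix stays weakly to the left and the rerouted suffix stays weakly to the right, so the two rerouted halves can meet only on $U^*\cup F^*$ and at the endpoints of $e$, preserving simplicity. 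Once $P'$ is constructed, it witnesses that $e$ is useful in the outside-case instance $C_{U^*,F^*}$, completing the proof.
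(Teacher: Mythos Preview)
Your backward direction is fine and matches the paper.

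For the forward direction you take a genuinely different route. The paper argues by contradiction: given a simple $s,t$-path $P$ through $e$, it examines how $P$ could leave $C_{U^*,F^*}$ and shows each case is impossible (crossing $U^*$ yields a forward path outside the strip, contradicting Lemma~\ref{lem:main-strip-encloses-all-forward}; crossing $F^*$ and returning forces a self-crossing, via Lemma~\ref{lem:no-reserse-flip-nor-hanging}). Thus the segment of $P$ around $e$ already sits inside the primary strip and witnesses usefulness there --- no global rerouting is needed.

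Your constructive rerouting is plausible in spirit but the final paragraph does not close the argument. Two concrete gaps:
\begin{itemize}
\item You say the prefix ``can be replaced using $U^*$ up to the entry vertex of $e$'' and the suffix ``using $F^*$ from the exit vertex of $e$ onward,'' but $u$ and $v$ are typically \emph{interior} vertices of the strip, not on $U^*$ or $F^*$. It is unclear what replacement path you are actually proposing. The right objects to identify are the last boundary vertex on $P$ before $e$ and the first one after $e$; the structural lemmas then force both to lie on $F^*$ in the correct order --- and at that point you have essentially recovered the paper's argument.
\item You conclude that the rerouted halves ``can meet only on $U^*\cup F^*$ and at the endpoints of $e$, preserving simplicity.'' But any meeting point on $U^*\cup F^*$ other than the endpoints of $e$ \emph{destroys} simplicity of $P_1'\cdot e\cdot P_2'$; you cannot shortcut past such a vertex without discarding $e$. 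You have asserted the conclusion you need rather than proved it, and with a forward-flipped detour present it is exactly this disjointness that is delicate.
\end{itemize}
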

\begin{proof}
  The backward direction is obvious.  We focus on the forward
  direction.  Assume for contradiction that there exist a useful arc
  $e$ in $C$ but $e$ is not useful in $C_{U^*,F^*}$.  Consider a
  useful path $P$ from $s$ to $t$ containing $e$.  Since $e$ is inside
  $C_{U^*,F^*}$, $P$ must cross the lowest-floor $F^*$ or the top-most
  ceiling $U^*$.  However, $P$ cannot cross $U^*$ as it would create a
  forward path outside the primary strip (contradicting
  Lemma~\ref{lem:main-strip-encloses-all-forward}).  Now suppose that
  $P$ crosses $F^*$ at some vertex $w$ before $e$.  If $P$ do not
  leave the primary strip after $w$, then clearly $e$ must be useful inside
  the primary strip.  If $P$ leaves the primary strip at the ceiling
  after $e$, the subpath of $P$ containing $e$ is again a forward
  path.  We also note that $P$ cannot leave $C_{U^*,F^*}$ at the floor
  because it would cross itself.  Since we reach contradiction in
  every case, the lemma follows.
\end{proof}

\subsection{The Ceiling Components are Useless}
\label{sec:ceiling-components}

In this section, we show that every arc in the ceiling components is useless.
The ceiling components are formed by arcs $e=(u,v)$ in $\hat{C}$ which
are reachable only from paths leaving $F^*$ to the left and $v$ can
reach $t$ only through paths entering $F^*$ from the left.

\begin{lemma}
  Every ceiling arc $e$ is useless.
  \label{lem:backward-comp-useless}
\end{lemma}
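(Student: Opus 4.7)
The plan is to suppose, for contradiction, that some ceiling arc $e=(u,v)$ is useful, witnessed by a simple $s,t$-path $P=\alpha\cdot e\cdot\beta$. Using the ceiling-arc hypothesis, I can pinpoint a ``good'' decomposition of $P$: let $x$ be the last vertex of $\alpha$ that lies on $F^*$ and $y$ be the first vertex of $\beta$ that lies on $F^*$. Then $\alpha$ leaves $F^*$ to the left at $x$ and $\beta$ enters $F^*$ from the left at $y$ (the definition of ceiling arc forces this, since otherwise there would be an $s,u$-path or a $v,t$-path violating the hypothesis). The subpath $\gamma=\alpha_{x,u}\cdot e\cdot\beta_{v,y}$ is a link with endpoints $x,y$ on $F^*$, non-flipped because it leaves to the left and enters from the left, and it contains the arc $e\in\hat{C}$.

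I would then split into two cases based on the $F^*$-order of $x$ and $y$. In the easy case where $x$ precedes $y$ on $F^*$, the link $\gamma$ is a non-flipped forward path w.r.t.\ $F^*$, so by Lemma~\ref{lem:main-strip-encloses-all-forward} it is contained in the primary strip $C_{U^*,F^*}$. This contradicts $e\in\hat{C}=C\setminus C_{U^*,F^*}$.

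The remaining case, where $y$ precedes $x$ on $F^*$, is the main obstacle and requires a planar topology argument. I consider the two subpaths $\alpha_{s,x}$ and $\beta_{y,t}$, whose endpoints on $F^*$ are $\{s,x\}$ and $\{y,t\}$, respectively. On $F^*$ these appear in the interleaved order $s,y,x,t$. A subclaim is that both subpaths lie in the left closure of $F^*$: any maximal $F^*$-to-$F^*$ segment of $P$ is a link, and by Lemma~\ref{lem:no-reserse-flip-nor-hanging} no link can be hanging (entirely on the right) or reverse-flipped, so every such segment lies in the left closure. I then invoke planarity: since $t$ is on the unbounded face, $F^*$ (compactified at $t$) acts as a Jordan arc that separates the left side from the right side, and the left closure (within $C$) is a topological disk bounded on one side by $F^*$ and on the other by $\partial C\setminus\text{detour}$. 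Two simple paths inside this disk with interleaved endpoints on its boundary must share a vertex. This shared vertex would lie in both $\alpha$ and $\beta$, contradicting the simplicity of $P=\alpha\cdot e\cdot\beta$.

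The main technical obstacle is precisely this Jordan-type step in the second case. Because the inside-case source $s$ is an interior point of $C$ and $F^*$ includes a boundary detour, the naive statement ``$F^*$ separates $C$ into two regions'' needs justification; the argument works by noting that $t$ lies on the outer face, so $F^*$ compactifies to a Jordan arc reaching infinity, genuinely separating the left and right. One must also verify that neither $\alpha_{s,x}$ nor $\beta_{y,t}$ escapes to the right of $F^*$, which uses the absence of hanging and reverse-flipped links.
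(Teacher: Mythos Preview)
Your Case~1 is correct and matches the paper's opening observation that $e$ cannot lie on a non-flipped forward path.

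Case~2, however, has a genuine gap. You claim that $\alpha_{s,x}$ and $\beta_{y,t}$ lie in the ``left closure of $F^*$'' and that this left closure is a topological disk, citing Lemma~\ref{lem:no-reserse-flip-nor-hanging}. But that lemma rules out only \emph{hanging} and \emph{reverse-flipped} links; it says nothing about \emph{flipped} links (those leaving $F^*$ to the left and re-entering from the right). The paper states explicitly, just before Lemma~\ref{lem:no-reserse-flip-nor-hanging}, that in the inside case $F^*$ does \emph{not} divide $C$ into two pieces, precisely because flipped paths wrap around the face $f_s$ containing $s$. So there is no global ``left region'', let alone a disk, and $\alpha_{s,x}$ or $\beta_{y,t}$ may well contain a flipped link that passes to the right of $F^*$. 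Your interleaved-endpoints Jordan argument then breaks down: two paths with interleaved endpoints on $F^*$ can avoid each other by having one of them wind around $f_s$. Your ``compactify at $t$'' fix does not help either --- a Jordan arc from an interior point $s$ to infinity does not separate the plane.

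The paper avoids this entirely by working relative to $U^*$ rather than $F^*$. A ceiling arc lies strictly to the left of $U^*$, so one takes $u'$ (resp.\ $v'$) to be the last (resp.\ first) vertex of $P\cap U^*$ before (resp.\ after) $e$, uses the degree-three constraint to pin down which $U^*$-arcs $P$ must use at $u'$ and $v'$, and then argues that the prefix $P_1$ and suffix $P_2$ of $P$ are forced to cross unless $P_2$ crosses $U^*$, which would create a clockwise cycle. If you want to rescue the $F^*$-based route, you must explicitly handle flipped links in $\alpha_{s,x}$ and $\beta_{y,t}$ --- for instance by showing that any such link, together with the fact that $\alpha$ starts at $s$ inside the region the link encircles, already forces an intersection. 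That may be feasible, but it is real additional work you have not supplied.
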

\begin{proof}
  First, note that $e$ does not lie in a forward path; otherwise $e$ 
  would be in the primary strip.

  Let $P$ be a useful path from $s$ to $t$ containing $e$.  Let $u'$
  be the last vertices of $P\cap U^*$ before reaching $e$ and $v'$ be
  the first vertices of $P\cap U^*$ after leaving $e$.

  Because of the degree constraint, $P'$ must use the only incoming
  arc of $u'$, say $e_u$, and the only outgoing arc of $v'$,
  say $e_v$.  Let $P_1$ be the prefix of $P'$ from $s$ to
  the head of $e_u$ and $P_2$ be the suffix of $P'$ from the tail of $e_v$
  to $t$.  The only way $P_2$ can avoid crossing $P_1$ is to
  cross $U^*$; however, this creates a clockwise cycle, a contradiction.
\end{proof}


\section{Conclusions and Open Problems}

In this paper, we presented the algorithm that simplifies
a directed planar network into a plane graph in which 
every vertex except the source $s$ and the sink $t$ has degree three, 
the graph has no clockwise cycle,
and every arc is contained in some simple $s,t$-path.
Our algorithm can be applied as a preprocessing step
for Weihe's algorithm \cite{Weihe97} and thus yields 
an $O(n\log n)$-time algorithm for computing
maximum $s,t$-flow on directed planar graphs.
This gives an alternative approach that departs from 
the $O(n\log n)$-time algorithm
of Borradaile and Klein \cite{BorradaileK09} and 
that of Erickson \cite{Erickson10},
which are essentially the same algorithm with different interpretations.
While other works mainly deal with maximum flow,
the main concern in this paper is in simplifying the flow network.
Henceforth, we believe that our algorithm will serve as
supplementary tools for further developments of network flow algorithms.

Next let us briefly discuss open problems.
A straightforward question is whether 
our approach can be generalized to 
a larger class of graphs on surfaces.
Another problem that might be interesting for readers
is to remove the prerequisites from our main algorithm.
Specifically, prior to feeding a plane graph to the algorithm,
we apply a sequence of reductions 
so that each vertex has degree three, 
and the plane graph contains no clockwise cycles.
These are the prerequisites required by our main algorithm.
Although the reduction does not change the value of the maximum flow,
the usefulness of arcs in the modified network 
may differ from the original graph.
It would be interesting to simplify the flow network without
changing the usefulness of arcs in the graph.

Lastly, we would like to note that if there exists a reduction  
that deals with clockwise cycles, 
then the degree requirement can be removed.
Specifically, if there is 
a procedure that given a plane graph $G$ constructs 
a new plane graph $G'$ with no clockwise cycles together
with another efficient procedure for identifying the usefulness of
original arcs in $G$ based on the results in $G'$, 
then we can apply the following lemma.

\begin{lemma}
\label{lem:no-cw-cycles}
  If a plane graph $G$ contains no clockwise cycles, a new plane graph
  $G'$ constructed by replacing every vertex of degree greater than three
  in $G$ with a {\bf clockwise cycle} preserves the usefulness of 
  every arc from $G$ (those arcs that are not contained in any clockwise cycle).
\end{lemma}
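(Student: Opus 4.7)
I would prove both directions of the equivalence on usefulness.

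For the forward direction, given a simple $s,t$-path $P$ in $G$ containing $e$, I would lift $P$ to a simple $s,t$-path $P'$ in $G'$ by replacing each visit of $P$ to a high-degree vertex $v$ with the unique directed sub-path of $C_v$ (traversed in its clockwise direction) from the $C_v$-vertex receiving $P$'s incoming arc at $v$ to the $C_v$-vertex sending $P$'s outgoing arc. This lift is well-defined since $C_v$ is a directed cycle, and it is simple since different $C_v$'s are vertex-disjoint and the segments of $P'$ outside all $C_v$'s inherit simplicity from $P$. Hence $e$ is useful in $G'$.

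For the converse, given a simple $s,t$-path $P'$ in $G'$ containing $e$, I would show that $P'$ visits each $C_v$ at most once; contracting each $C_v$ back to $v$ then yields a simple $s,t$-path $P$ in $G$ containing $e$. Suppose for contradiction some $C_v$ is visited at least twice. Picking two consecutive visits with corresponding entry and exit arcs $\alpha_1, \beta_1$ and $\alpha_2, \beta_2$ at $v$ in $G$, the two sub-paths used on $C_v$ are vertex-disjoint clockwise sub-paths, which forces the cyclic clockwise order on $C_v$ of their endpoints to be $(a_1, b_1, a_2, b_2)$; by the construction of $G'$ the corresponding arcs at $v$ in $G$ appear in the same cyclic clockwise order $(\alpha_1, \beta_1, \alpha_2, \beta_2)$. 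Let $D'$ be the simple cycle in $G'$ formed by joining the sub-path of $P'$ between the two visits (from $b_1$ to $a_2$) with the clockwise sub-arc of $C_v$ from $a_2$ to $b_1$ (through $b_2$ and $a_1$).

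I would then analyze the direction of $D'$. If $D'$ is clockwise in $G'$, then the interior of $C_v$ (which lies on the right of the clockwise $C_v$-sub-arc) lies inside $D'$; contracting $C_v$ to $v$ produces a cycle $D$ in $G$ through $v$ that inherits this clockwise orientation, contradicting the hypothesis that $G$ has no clockwise cycles. Otherwise $D'$ is counterclockwise and its interior lies on the left of its walk; at the $C_v$-vertices $a_1$ and $b_2$, this interior coincides with the exterior of $C_v$, where the external arcs $\alpha_1$ and $\beta_2$ respectively lie. Thus the sub-paths of $P'$ from $s$ to $a_1$ (ending at $\alpha_1$) and from $b_2$ to $t$ (starting at $\beta_2$) enter the bounded interior of $D'$; since by the simplicity of $P'$ they share no vertex with $D'$ besides their endpoints $a_1$ and $b_2$, planarity forbids them from crossing $D'$, trapping both $s$ and $t$ in the interior of $D'$. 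This contradicts the standing assumption that $t$ lies on the unbounded outer face.

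The main obstacle is the direction-of-cycle analysis: translating the combinatorial cyclic-order constraint at $v$ into the geometric statement that $C_v$'s interior lies in $D'$'s interior in the clockwise case (so that contracting inherits the clockwise orientation), and that $\alpha_1, \beta_2$ lie on the interior side of $D'$ in the counterclockwise case (so that planarity traps $s$ and $t$). Once this geometric step is in place, the forward direction is routine and the backward direction follows cleanly from the above contradiction.
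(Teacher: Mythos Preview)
Your proposal is correct and follows essentially the same approach as the paper. In fact, once you unfold the definitions, your cycle $D'$ coincides with the paper's cycle $Q$ (the paper takes the $w_{e_1},w_{e_3}$-subpath of $P'$ together with the $w_{e_3},w_{e_1}$-subpath of $C_w$, which is the same closed curve you build, just with a different starting point). The only presentational difference is that you make the clockwise/counterclockwise dichotomy explicit---contracting to obtain a clockwise cycle in $G$ in one case, and trapping $t$ inside $D'$ in the other---whereas the paper asserts directly that $Q$ is counterclockwise (implicitly invoking your clockwise-case argument) and then shows that the suffix of $P'$ from $w_{e_4}$ cannot escape $Q$ to reach $t'$ without $P'$ crossing itself.
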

\begin{proof}
We first describe the reduction formally.
Let $G$ be a plane graph with no clockwise cycle,
and let $s$ and $t$ be the source and sink vertices.
For each vertex $v\in V(G)$, let $d_v$ denote the degree of $v$ in $G$.
We construct $G'$ by first adding copies of $s$ and $t$, 
namely $s'$ and $t'$, respectively.
Then we add to $G'$ a clockwise cycle $C_v$ on $d_v$ vertices, 
for each vertex $v\in V(G)\setminus\{s,t\}$.
Each vertex $v_e$ in the cycle $C_v$ corresponds to an arc $e$
incident to $v$ in $G$, and vertices in $C_v$ are sorted in 
the same cyclic order as their corresponding arcs in $G$.
Next we add an arc $u_{uv}v_{uv}$ to $G'$ for every arc $uv\in E(G)$.
Observe that $G'$ is a planar graph obtained by replacing
each vertex of $G$ by a cycle,
and we can keep the same planar drawing as that of $G$.
In particular, the resulting graph $G'$ is a plane graph.

It can be seen that 
every simple path in $G$ corresponds to some simple path in $G'$.
Thus, every useful arc in $G$ (w.r.t. $s$ and $t$) is 
also useful in $G'$ (w.r.t. $s'$ and $t'$).
Now let us consider a useless arc $uv$ in $G$ (w.r.t. $s$ and $t$).
Assume to the contrary that there is a simple $s',t'$-path $P'$ in $G'$
containing the arc $v_{uv}u_{uv}$,
which is the arc corresponding to $uv$.
Since $G$ has no simple $s,t$-path containing $uv$,
we know that $P'$ maps to a walk $P$ in $G$ that visits some vertex $w$
at least twice.
Thus, $P'$ must visit the cycle $C_w$ at least twice as well.
Let us say $P'$ enters $C_w$ at a vertex $w_{e_1}$, 
leaves $C_w$ from a vertex $w_{e_2}$,
enters $C_{w}$ again at a vertex $w_{e_3}$
and then leaves $C_{w}$ from a vertex $w_{e_4}$.
Then we can construct a cycle $Q$ by walking along 
the $w_{e_1},w_{e_3}$-subpath of $P'$ 
and then continue to the $w_{e_3},w_{e_1}$-subpath of $C_w$.
Since the original graph $G$ has no clockwise cycle, 
$Q$ must be counterclockwise.
Moreover, since $C_w$ is a clockwise cycle,
the vertex $w_{e_4}$ must lie between $w_{e_3}$ and $w_{e_1}$.
At this point, it is not hard to see that
the cycle $Q$ must enclose the $s',w_{e_4}$-subpath of $P'$.
So, the only way that $P'$ can leave the cycle $Q$ and reach $t'$
is to cross the $w_{e_2},w_{e_3}$-subpath of $C_w$.
But, this is not possible unless $P'$ crosses itself.
Hence, we arrive at a contradiction.

Therefore, the graph $G'$ preserves the usefulness of every arc from $G$.
\end{proof}

\medskip

\noindent{\bf Acknowledgement.}
We thank Joseph Cheriyan for introducing us the flow network simplification
problem. We also thank Karthik~C.S. for pointing out some typos.

Part of this work was done while Bundit Laekhanukit was visiting the Simons Institute for the Theory of Computing. It was partially supported by the DIMACS/Simons Collaboration on Bridging Continuous and Discrete Optimization through NSF grant \#CCF-1740425.

\bibliographystyle{alpha}
\bibliography{references}
\end{document}